\documentclass{article}

\usepackage[final]{cpal_2024}

\usepackage[utf8]{inputenc} 
\usepackage[T1]{fontenc}    
\usepackage{hyperref}       
\usepackage{url}            
\usepackage{booktabs}       
\usepackage{amsfonts}       
\usepackage{nicefrac}       
\usepackage{microtype}      
\usepackage{xcolor}         
\usepackage{amsmath}
\usepackage{amssymb}
\usepackage{xcolor}
\usepackage{mathtools}
\usepackage{amsthm}
\usepackage{dsfont}
\usepackage{enumitem}
\usepackage{algorithm}
\usepackage{subcaption}
\usepackage[capitalize,noabbrev]{cleveref}

\theoremstyle{plain}
\newtheorem{theorem}{Theorem}[section]
\newtheorem{proposition}[theorem]{Proposition}

\theoremstyle{definition}
\newtheorem{definition}[theorem]{Definition}

\theoremstyle{remark}

\title{Mutation-Guided Differentiable Quadratic Combinatorial Optimization}

\author{Yongliang Sun\textsuperscript{1},~  Ismail R. Alkhouri\textsuperscript{2,3}, Cheng-Han Huang\textsuperscript{1},\\  ~Alvaro Velasquez\textsuperscript{4}, ~Susmit Jha\textsuperscript{5}, ~Rongrong Wang\textsuperscript{1} \\ 
  \textsuperscript{1}Michigan State University\\
  \textsuperscript{2}Los Alamos National Laboratory\\
  \textsuperscript{3}University of Michigan\\
  \textsuperscript{4}University of Colorado at Boulder\\
  \textsuperscript{5}SRI International\\
}

\begin{document}

\maketitle

\begin{abstract}
Recent studies suggest that gradient-based methods applied to relaxed box-constrained Quadratic Unconstrained Binary Optimization (QUBO) formulations can outperform classical heuristics in some large-scale regimes, often relying on heavy parallelization. However, these methods still underperform heuristics in other settings. In this work, we clarify this apparent discrepancy through a detailed analysis of the relaxed non-convex QUBO local maxima for both the Maximum Independent Set (MIS) and Maximum Cut (MaxCut) problems, and by introducing a new quadratic objective for MaxCut. Motivated by this analysis, we propose a mutation-based differentiable global reset algorithm, combined with local search to escape local maxima. We term our approach mQO, standing for \textbf{m}utation-based \textbf{Q}uadratic combinatorial \textbf{O}ptimization. The proposed strategy dramatically improves the performance of gradient-based solvers without \textcolor{black}{heavy reliance} on GPU parallelized initializations, indicating that stalling, rather than model capacity or compute, is the dominant bottleneck. \textcolor{black}{As a result, on large-scale graphs, mQO achieves superior performance against state-of-the-art heuristics, commercial integer programming solvers, and recent GPU methods.} 
\end{abstract}



\section{Introduction}


Combinatorial optimization problems (COPs) arise in a wide range of applications, encompassing tasks from scheduling to graph partitioning and beyond \cite{glover2019quantum}. Among the most prominent COPs are the Maximum Independent Set (MIS) and the Maximum Cut (MaxCut) problems \cite{karp1972reducibility}, which appear in diverse domains such as frequency assignment and interference management in wireless networks \cite{Matsui2000FrequencyAssignment, gu2024graph}, task scheduling \cite{eddy2021maximum}, genome sequencing \cite{joseph1992dna, Zweig2006OnTM}, physical design of VLSI circuits \cite{liers2011via}, and phylogenetic tree construction in computational biology \cite{4015375}.

A standard approach to tackling these problems is to formulate them as Integer Linear Programs (ILPs) \cite{nemhauser1975vertex} or as Quadratic Unconstrained Binary Optimization (QUBO) problems \cite{glover2019quantum}. However, since both are integer programs, scalability remains a major challenge, and the run-time required to obtain optimal solutions can grow prohibitively large. As a result, problem-specific heuristic methods, which do not provide optimality guarantees, have been developed. Examples include the Reduction-based method for MIS (ReduMIS) \cite{lamm2016finding} and Breakout Local Search (BLS) for MaxCut \cite{benlic2013breakout}.

More recent studies have explored using Machine Learning (ML) techniques to solve COPs (see the ML4CO and Neural CO (NCO) survey papers \cite{ANGIONI2025107102,pmlr-v176-gasse22a}). While many of these methods obtain solutions significantly fast (relative to heuristics) at inference, most of them suffer from generalization \cite{alkhouri2025differentiable} (unlike very few such as the ANYCSP method in \cite{tonshoff2023one}). In most cases, heuristics and IP commercial solvers (especially with sparse problem instances) achieve the state-of-the-art results to a degree that supervised learning methods use heuristics to label their training data \cite{li2018combinatorial}. 


Applying gradient-based optimizers to box-constrained relaxed QUBOs does not suffer from the same scalability limitations as integer formulations \cite{alkhouri2025differentiable, alkhouri2026MaxCut}. However, the relaxed QUBO landscape is highly non-convex, and gradient-based methods typically stall at local optima. This significantly limits their ability to explore the solution space \textit{efficiently}. With the recent advances in parallel computing, particularly GPUs, two main strategies have been proposed to improve exploration: (i) GPU-based sampling methods \cite{sun2023revisiting, resco} and (ii) parallelized initializations methods \cite{alkhouri2025differentiable, alkhouri2026MaxCut}. Nevertheless, reliance on first-order optimization in such non-convex landscapes, and its tendency to get trapped in local optima, remains a key bottleneck. In general, while these approaches may outperform state-of-the-art (SOTA) heuristics in certain regimes, they can also underperform in others, as recently demonstrated for MIS in \cite{wu2025unrealized}.


Motivated by these observations, we provide a detailed analysis of the relaxed non-convex QUBO landscapes for both the MIS and MaxCut problems, and introduce new quadratic objectives for MaxCut. Furthermore, we propose a sequential framework that enables more effective exploration through a simple yet powerful mechanism. Our approach combines an iterative, differentiable, mutation-based global reset with adopted discrete local search procedures.

Compared to SOTA heuristics, our motivation for using a gradient-based framework is that it naturally produces a strong initial solution, whereas heuristics typically begin from any solution and rely entirely on local search procedures such as \cite{andrade2012fast}. As we will show empirically, this leads to faster convergence in practice. We summarize our contributions as follows:


\begin{enumerate}[leftmargin=*]
    \item We provide theoretical analysis on the relaxed non-convex QUBOs for MIS and MaxCut. We introduce the notion of \emph{locally repairable} local maxima and show that gradient-based optimizers can stall even at most of such points. Motivated by this observation, we propose a new quadratic formulation for MaxCut and theoretically show that it allows gradient-based methods to avoid and escape one-node locally repairable local maxima. 
    
    \item Guided by our analysis, we propose a sequential algorithm  that efficiently improves exploration by combining a differentiable conditional mutation-driven global reset with discrete local search procedures.
    

    \item We conduct experimental results on small- and large- scale Erdős–Rényi (ER) graphs (with varying densities), \textcolor{black}{and other benchmark graphs}. We demonstrate that our algorithms allow gradient-based optimizer to either outperform or provide faster solutions against SOTA heuristics, ILP solvers, and other baselines, \textcolor{black}{especially on large-scale graphs}.

\end{enumerate}


\section{Preliminaries \textcolor{black}{and motivation}}\label{sec: prelim}


\paragraph{Notations:} An undirected graph is represented as $G=(V,E)$, where $V$ is the set of nodes and $E\subseteq V \times V$ is the set of edges. 
The number of nodes (resp. edges) is $|V| = n$ (resp. $|E| = m$), where $|\cdot|$ denotes the cardinality. 
Unless otherwise stated, we use $\mathcal{N}(v) = \{u\in V \mid (u,v)\in E\}$ to represent the set of neighbors of node $v$. 
The degree of a node $v\in V$ is $\textrm{d}(v) = |\mathcal{N}(v)|$, and the maximum degree of the graph is $\Delta$. 
The adjacency matrix of graph $G$ is denoted by $\mathbf{A}\in \{0,1\}^{n \times n}$. The degree diagonal matrix is $\mathbf{D}$. We use $\mathbf{I}$ to denote the identity matrix. 
The trace of a matrix $\mathbf{A}$ is denoted by $\mathrm{tr}(\mathbf{A})$. 
For any positive integer $n$, $[n]:=\{1, \ldots, n\}$. The vector of all ones and size $n$ is denoted by $\mathbf{1}_n$. The $i^{\textrm{th}}$ canonical basis vector is denoted by $\mathbf{e}_i$, which is the vector of all zeros except at the $i^{\textrm{th}}$ index, where the value is one. Furthermore, we use $\mathds{1}(\cdot)$ to denote the indicator function that returns $1$ (resp. $0$) when its argument is True (resp. False).

In this paper, we consider the MIS and MaxCut problems, which we formally define next. 

\begin{definition}[MIS]
  \label{def: MIS}
  Given an undirected graph $G = (V, E)$, the goal of MIS is to find a subset of vertices $I \subseteq V$ such that there are no edges connecting any nodes in $I$, and $|I|$ is maximized.
\end{definition}

Let $\mathbf{z}_v$ be an entry of $\mathbf{z}\in \{0,1\}^n$ that corresponds to a node $v\in V$. The MIS ILP is \cite{nemhauser1975vertex}:
\begin{align} \label{eqn: MIS ILP}
\max_{\mathbf{z}\in \{0,1\}^n}  \sum_{v \in V} \mathbf{z}_v~~~~ 
\text{s.t.}~~~~ \mathbf{z}_v + \mathbf{z}_u \leq 1\:, \forall (v,u) \in E. 
\end{align}
\begin{definition}[MaxCut] 
Given an undirected graph $G=(V,E)$, the goal of MaxCut is to partition the nodes into two non-overlapping sets $S$ and $\bar{S} = V \setminus S$ such that the number of edges crossing the cut (i.e., with one endpoint in $S$ and the other in $\bar{S}$) is maximized.
\end{definition}

Let $\mathbf{y}_{v,u}$ corresponds to edge $(v,u)$, stacked in an $m$-dimensional vector. The MaxCut ILP is \cite{williamson2011design}:
\begin{align}\label{eqn: cut ILP}
\max_{\mathbf{z}\in\{0,1\}^{n},~\mathbf{y}\in\{0,1\}^{m}} \sum_{(v,u)\in E} \mathbf{y}_{v,u} ~~~~\text{s.t.} ~~~~ \mathbf{y}_{v,u} \leq \mathbf{z}_{v} + \mathbf{z}_{u}, ~ \mathbf{y}_{v,u} \leq 2 - \mathbf{z}_{v} - \mathbf{z}_{u}, \quad \forall (v,u) \in E \:. 
\end{align}
Most standard MIS and MaxCut heuristic methods, such as \cite{lamm2016finding} for MIS and \cite{benlic2013breakout} for MaxCut, typically use discrete local search procedures that can improve the solution. Next, we formally define the $(1,2)$-swap procedure for MIS \cite{andrade2012fast}, the $1$-flip \cite{benlic2013breakout}, and $2$-flip \cite{two_flip_max_cut_local_search} searches for MaxCut, through the binary points that can improve if the steps are applied. We term such points as ``\textbf{repairable}'' points. 
%


\begin{definition}[MIS $(1,2)$-swap repairable point]
    \label{def: 1,2 swap}
    Let $\mathbf{x}\in \{0,1\}^n$ be a binary vector such that $I = \{v\in V : \mathbf{x}_v = 1\}$ is a maximal IS in graph $G=(V,E)$. Vector $\mathbf{x}$ is $(1,2)$-swap repairable if there exists some $\mathbf{x}_v=1$ such that if we remove $v$, and add two of its non-adjacent neighbors (to each other), that is $u,w \notin I$ and $u,w \in \mathcal{N}(v)$, the new set $I' = (I\setminus \{v\}) \cup \{u,w\}$ is independent and its cardinality increased by one, i.e., $|I'| = |I|+1$. 
\end{definition}
%
%
\begin{definition}[MaxCut $1$-flip repairable point]
    \label{def: 1-flip}
    Let $\mathbf{x}\in \{-1,1\}^n$ be a binary vector such that $S = \{v\in V : \mathbf{x}_v = 1\}$ corresponds to a cut in graph $G=(V,E)$ of value $\texttt{Cut}(S) := \sum_{v\in S} \sum_{u\in V \setminus S} \mathds{1}((v,u)\in E)$. Vector $\mathbf{x}$ is $1$-flip repairable if there exists some $\mathbf{x}_v=1$ such that if we place $v$ in set $\bar{S}$ instead of $S$ (i.e., we have $S' = S\setminus \{v\}$), the cut value will increase, that is, $\texttt{Cut}(S')>\texttt{Cut}(S)$.
\end{definition}
%
%

\begin{definition}[MaxCut $2$-flip repairable point]
    \label{def: 2-flip}
    Let $\mathbf{x}\in \{-1,1\}^n$ be a binary vector such that $S = \{v\in V : \mathbf{x}_v = 1\}$ corresponds to a cut in graph $G=(V,E)$ of value $\texttt{Cut}(S)$. Vector $\mathbf{x}$ is $2$-flip repairable if there exists some $u \in S$, $v \notin S$ and $(u,v) \in E$ with $\mathbf{x}_v \neq \mathbf{x}_u$ such that if we place $u$ in set $\bar{S}$ and $v$ in set $S$, i.e., we have $S' = (S\setminus \{u\}) \cup \{v\}$ and $\bar{S}' = (\bar{S}\setminus \{v\}) \cup \{u\}$, the cut value will increase, that is, $\texttt{Cut}(S')>\texttt{Cut}(S)$.
\end{definition}
%

{\color{black}
\paragraph{Complexity of local search procedures:}

For \textbf{MIS $(1,2)$-swap,} if $I$ is a MaxIS, then for each $v \in I$, in the worst case, this requires enumerating $O(\textrm{d}^2(v))$ pairs per node, leading to a total complexity of $O\!\left(\sum_{v\in I} \textrm{d}^2(v)\right) \le O(|I|\Delta^2).$
For \textbf{MaxCut 1-flip,} for each $v \in V$, the gain from flipping can be computed by inspecting its neighbors, which costs $O(\textrm{d}(v))$.
Summing over all nodes yields a total complexity of $O\!\left(\sum_{v\in V} \textrm{d}(v)\right) = O(m).$
A naive implementation of \textbf{MaxCut 2-flip} considers pairs of nodes (typically edges), and evaluates the gain of swapping them.
This leads to a complexity of $O(m\Delta),$
since each evaluation involves scanning neighborhoods.
With incremental gain updates, this can be reduced to $O(m)$ amortized per full pass.

\noindent
\textbf{Key remark on iterative complexity.}
While a single pass of local search (e.g. MaxCut $1$-flip procedure) requires $O(m)$ time, the main computational cost arises from the need to apply the procedure iteratively.
Indeed, after performing one improving flip, new nodes may become flippable, and thus multiple passes over the nodes are required until no further improvement is possible.
In the worst case, the number of such iterations depends on the number of successful flips before convergence.
Therefore, the overall complexity of the iterative $1$-flip local search can be written as $O(T \cdot m),$
where $T$ denotes the number of improvement rounds.

\textbf{Motivation for continuous relaxation.}
This observation suggests that, despite their simplicity, local search procedures can become computationally expensive, especially for large-scale graphs.
To mitigate this issue, based on the characterization of the next section, in Section~\ref{sec: proposed algorithm}, we propose a continuous relaxation framework and apply projected gradient ascent optimization (PGA), which enables more efficient exploration of the solution space while reducing reliance on repeated discrete local updates.
Moreover, the continuous formulation naturally supports parallelization on modern hardware (e.g., GPUs), allowing scalable optimization through batched operations.
}

\section{Local maxima of relaxed QUBOs \& a new MaxCut formulation}\label{sec: theory}

{\color{black}
To understand the computational advantages of PGA, it is essential to characterize which types of local search procedures can be \emph{effectively avoided} through continuous relaxation.
In particular, we are interested in identifying classes of \emph{repairable points} that PGA can escape without requiring explicit discrete updates.

Overall, our analysis shows that the ability of PGA to escape repairable points depends critically on the problem and formulation: it does not succeed for $(1,2)$-swap in MIS, and while it fails for $1$-flip in the standard MaxCut formulation, this limitation is removed by our proposed new formulation.

Although this work focuses on  MIS and MaxCut, the underlying approach of characterizing local maxima of relaxed objectives have the potential to be extended to other COPs.
}

In this section, we first characterize local maxima of the MIS QUBO w.r.t. repairable points. We then turn to MaxCut, where we provide a detailed analysis and introduce a new quadratic formulation. All proofs are deferred to Appendix~\ref{appen: proofs}.

For MIS, the original QUBO is given by~\cite{pardalos1992branch}:
\begin{equation}
\label{eqn: MIS QUBO}
\begin{aligned}
\max_{\mathbf{x}\in \{0,1\}^n} h(\mathbf{x})
:= \sum_{v\in [n]} \mathbf{x}_v - \gamma \sum_{(v,u)\in E} \mathbf{x}_u \mathbf{x}_v
= \mathbf{1}^T \mathbf{x} - \frac{\gamma}{2} \mathbf{x}^T \mathbf{A} \mathbf{x}\:,
\end{aligned}
\end{equation}
where $\gamma>1$ is the edge-penalty parameter~\cite{mahdavi2013characterization}. For MaxCut, the standard QUBO is~\cite{williamson2011design}:
\begin{equation}
\label{eqn: Cut QUBO}
\max_{\mathbf{x}\in \{-1,1\}^n} f_\textrm{L}(\mathbf{x}) := \frac{1}{2} \sum_{(v,u)\in E} (1-\mathbf{x}_u \mathbf{x}_v) = \frac{1}{4}\mathbf{x}^T \mathbf{L} \mathbf{x}\:,
\end{equation}
where $\mathbf{L} = \mathbf{D} - \mathbf{A}$ is the graph Laplacian. We refer to $f_{\textrm{L}}$ as the \textbf{Laplacian formulation}.

We consider box-constrained relaxations of~\eqref{eqn: MIS QUBO} and~\eqref{eqn: Cut QUBO}, namely $\mathbf{x}\in [0,1]^n$ for MIS and $\mathbf{x}\in [-1,1]^n$ for MaxCut. Applying PGA to these relaxations (as well as to the formulations introduced later) yields the generic update: $\mathbf{x} \leftarrow \Pi(\mathbf{x} + \alpha \mathcal{G})$,
where $\mathcal{G}$ denotes a gradient step (w.r.t. the objective), $\alpha>0$ is the step size, and $\Pi$ is the projection onto the corresponding box constraint (i.e., $[0,1]^n$ or $[-1,1]^n$). In this paper, we use ``PGA'' to broadly refer to gradient-based optimizers, including variants such as momentum-based gradient descent~\cite{polyak1964some} and Adam~\cite{kingma2015adam}.

Running PGA on $h$ or $f_{\textrm{L}}$ may converge to boundary (and hence binary) points corresponding to feasible MIS or MaxCut solutions. More generally, PGA fixed points can be classified as: (i) \textbf{non-stationary boundary points}, where $\mathcal{G} \neq \mathbf{0}$ but projection prevents further updates, and (ii) \textbf{stationary points}, where the gradient vanishes (i.e., $\nabla h = \mathbf{0}$ or $\nabla f_{\textrm{L}} = \mathbf{0}$).

To obtain discrete solutions, we threshold the relaxed variables: for MIS, we define $I = \{v \in V : \mathbf{x}_v > 0.5\}$ for $\mathbf{x}\in [0,1]^n$, and for MaxCut, we define $S = \{v \in V : \mathbf{x}_v > 0\}$ for $\mathbf{x}\in [-1,1]^n$.

For the MIS problem, PGA can avoid $1$-flip improvable points that correspond to non-maximal ISs (as was shown in the proof of Theorem~2 of \cite{alkhouri2025differentiable}), but may converge to local maxima corresponding to suboptimal maximal independent sets (MaxISs), as observed empirically in~\cite{alkhouri2025differentiable}, where it was also shown that all PGA non-stationary fixed boundary points correspond to MISs (or MaxISs). The following result further shows that PGA can stall even at $(1,2)$-swap repairable points (see Definition~\ref{def: 1,2 swap}).
\begin{proposition}
\label{prop: PGA is stuck at (1,2)}
Let $\mathbf{x}$ be a MIS $(1,2)$-swap repairable point. Then, $\mathbf{x} = \Pi(\mathbf{x} + \alpha \mathcal{G})$.
\end{proposition}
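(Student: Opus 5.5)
The plan is to show that the gradient at such a point pushes every coordinate outward against the box constraint, so the projection returns $\mathbf{x}$ unchanged. We ignore the swap structure: a $(1,2)$-swap repairable point is, by Definition~\ref{def: 1,2 swap}, the indicator vector $\mathbf{x}\in\{0,1\}^n$ of a maximal independent set $I$. The goal is therefore the stronger claim that \emph{every} binary $\mathbf{x}$ encoding a MaxIS is a fixed point of $\Pi(\mathbf{x}+\alpha\mathcal{G})$ on $[0,1]^n$ for all $\alpha>0$. This explains why PGA cannot detect $(1,2)$-swap improvements.

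First compute the gradient of $h$ in~\eqref{eqn: MIS QUBO}: $\mathcal{G}=\nabla h(\mathbf{x}) = \mathbf{1}-\gamma\mathbf{A}\mathbf{x}$, so
\[
\mathcal{G}_v = 1-\gamma\,|\mathcal{N}(v)\cap I|.
\]
Since projection onto $[0,1]^n$ acts coordinatewise, it suffices to check two cases.

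For $v\in I$, we have $\mathbf{x}_v=1$. Independence of $I$ gives $|\mathcal{N}(v)\cap I|=0$, so $\mathcal{G}_v=1>0$. Then $\mathbf{x}_v+\alpha\mathcal{G}_v=1+\alpha>1$, which projects back to $1$.

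For $v\notin I$, we have $\mathbf{x}_v=0$. Maximality of $I$ gives $|\mathcal{N}(v)\cap I|\ge 1$, so $\mathcal{G}_v\le 1-\gamma<0$ because $\gamma>1$. Then $0+\alpha\mathcal{G}_v<0$, which projects back to $0$.

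Combining the two cases gives $\Pi(\mathbf{x}+\alpha\mathcal{G})=\mathbf{x}$.

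The main subtlety is the $v\notin I$ case. It depends on maximality, which Definition~\ref{def: 1,2 swap} builds in, and on the strict inequality $\gamma>1$. It is also worth remarking that the swap candidates $u,w\in\mathcal{N}(v)\setminus I$ have exactly one neighbor in $I$, namely $v$. Their gradient is $1-\gamma<0$, so first-order information gives no incentive to move toward the improved set $I'$. For momentum or Adam variants, I would note that at a fixed point reached with zero accumulated update, the sign of each update coordinate equals the sign of $\mathcal{G}_v$, so the same coordinatewise argument applies.
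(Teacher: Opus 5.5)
Your proposal is correct and follows essentially the same route as the paper: a coordinatewise check that $\nabla h(\mathbf{x}) = \mathbf{1}-\gamma\mathbf{A}\mathbf{x}$ points out of the box $[0,1]^n$ at every coordinate of any maximal-IS indicator vector, using independence for $v\in I$ and maximality together with $\gamma>1$ for $v\notin I$. Your version is in fact cleaner than the paper's. You treat the $v\in I$ case explicitly and use the correct ascent sign conditions, whereas the paper's displayed inequalities are written in a minimization convention and it computes only the $v\notin I$ case.
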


This result motivates the use of local search procedures (e.g., $(1,2)$-swap moves) within our mutation-based algorithm, described in the next section.

We now turn to MaxCut and analyze the fixed points of the Laplacian formulation $f_{\textrm{L}}$.

\begin{proposition}
\label{prop: null space of L is cut 0 in f L} 
For the Laplacian objective $f_\textrm{L}$, there exist (i) PGA fixed boundary points and (ii) spurious interior stationary points, both yielding a cut value of zero.
\end{proposition}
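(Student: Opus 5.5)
The plan is to exhibit explicit points, using that the gradient of $f_{\textrm{L}}$ is $\nabla f_{\textrm{L}}(\mathbf{x}) = \frac{1}{2}\mathbf{L}\mathbf{x}$. The key tool is the null space of $\mathbf{L}$. For a connected graph, $\mathbf{L}\mathbf{1}_n = \mathbf{0}$, and more generally the null space is spanned by the indicator vectors of the connected components.

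For part (i), I would take $\mathbf{x} = \mathbf{1}_n$ (or $-\mathbf{1}_n$). This point lies on the boundary of $[-1,1]^n$ and is binary. Since $\mathcal{G} = \frac{1}{2}\mathbf{L}\mathbf{1}_n = \mathbf{0}$, we get $\Pi(\mathbf{x}+\alpha\mathcal{G}) = \Pi(\mathbf{x}) = \mathbf{x}$, so it is a PGA fixed boundary point. Its cut is $S = V$, $\bar S = \emptyset$, so $\texttt{Cut}(S)=0$; equivalently $f_{\textrm{L}}(\mathbf{1}_n) = \frac14 \mathbf{1}^T\mathbf{L}\mathbf{1} = 0$. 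This matches the statement's phrase ``fixed boundary point''. I would note that it is in fact also stationary, and mention that it is the global minimum of the relaxed objective on the box since $\mathbf{L}\succeq 0$.

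For part (ii), I would take $\mathbf{x} = c\,\mathbf{1}_n$ with $c\in(0,1)$, for instance $c = 1/2$. This is an interior point with $\nabla f_{\textrm{L}}(\mathbf{x}) = \frac{c}{2}\mathbf{L}\mathbf{1}_n = \mathbf{0}$, so it is stationary. Thresholding gives $S = \{v : \mathbf{x}_v > 0\} = V$, hence cut value zero. It is spurious because it does not correspond to a good cut, and indeed $f_{\textrm{L}}(\mathbf{x}) = 0$ is the minimum value. For a disconnected graph I would use combinations $\sum_k c_k \mathbf{1}_{C_k}$ with all $c_k>0$, which keeps $S=V$. The origin $\mathbf{x}=\mathbf{0}$ is another stationary interior point, and under the strict threshold it gives $S=\emptyset$ and again a zero cut.

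There is no real obstacle here. The only care needed is to state precisely that the threshold yields $S=V$ (or $\emptyset$), so the cut is $0$, and to cover general graphs via the component indicator vectors.
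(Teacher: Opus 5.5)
Your proposal is correct and takes the paper's route: the paper also uses $c\mathbf{1}_n \in \ker \mathbf{L}$, with $c\in\{-1,1\}$ for the binary fixed boundary points and $c\in(0,1)$ for the interior stationary points, and both threshold to $S=V$ or $S=\emptyset$, giving a zero cut. Your factor $\tfrac12$ in $\nabla f_{\textrm{L}}$ is the precise one, and since $\mathbf{L}\mathbf{1}_n=\mathbf{0}$ holds for every graph, connected or not, the component-indicator construction for disconnected graphs is not needed.
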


This result indicates that PGA may stall at both binary and non-binary points with zero cut value. To mitigate this issue, we introduce a diagonal perturbation:
\begin{equation}
\label{eqn: perturbed Laplacian cut}
f_{\textrm{P}}(\mathbf{x}) := \mathbf{x}^T (\mathbf{L} + \lambda \mathbf{I})\mathbf{x}\:, \quad \lambda > 0.
\end{equation}
We refer to $f_{\textrm{P}}$ as the \textbf{perturbed Laplacian formulation}. The following theorem shows that all binary points are fixed under PGA and correspond to valid MaxCut solutions.
\begin{theorem}
\label{th: all fixed of perturbed laplacian}
For $\lambda > 0$ and any $\mathbf{x}\in\{-1,1\}^n$, we have $\mathbf{x} = \Pi_{[-1,1]^n}(\mathbf{x}+\alpha\nabla f_\textrm{P}(\mathbf{x})). $
%
\end{theorem}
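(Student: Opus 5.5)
The plan is a direct coordinatewise computation. Since $\Pi_{[-1,1]^n}$ acts entry by entry by clipping to $[-1,1]$, it suffices to show that for every $\mathbf{x}\in\{-1,1\}^n$ and every $v\in[n]$, the gradient component $[\nabla f_{\textrm{P}}(\mathbf{x})]_v$ has the same sign as $\mathbf{x}_v$ (or is zero). If so, then for $\alpha>0$ the quantity $\mathbf{x}_v+\alpha[\nabla f_{\textrm{P}}(\mathbf{x})]_v$ is $\ge 1$ when $\mathbf{x}_v=1$ and $\le -1$ when $\mathbf{x}_v=-1$. Clipping then returns exactly $\mathbf{x}_v$.

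First, I compute the gradient. Because $\mathbf{L}+\lambda\mathbf{I}$ is symmetric,
\begin{equation*}
\nabla f_{\textrm{P}}(\mathbf{x}) = 2(\mathbf{L}+\lambda\mathbf{I})\mathbf{x}.
\end{equation*}
Using $\mathbf{L}=\mathbf{D}-\mathbf{A}$, the $v$-th component is
\begin{equation*}
[\nabla f_{\textrm{P}}(\mathbf{x})]_v = 2\Big((\textrm{d}(v)+\lambda)\mathbf{x}_v - \sum_{u\in\mathcal{N}(v)}\mathbf{x}_u\Big).
\end{equation*}

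Second, I multiply this component by $\mathbf{x}_v$ and use $\mathbf{x}_v^2=1$:
\begin{equation*}
\mathbf{x}_v[\nabla f_{\textrm{P}}(\mathbf{x})]_v = 2\Big(\textrm{d}(v)+\lambda-\sum_{u\in\mathcal{N}(v)}\mathbf{x}_u\mathbf{x}_v\Big).
\end{equation*}
Each term $\mathbf{x}_u\mathbf{x}_v$ lies in $\{-1,1\}$, so the sum is at most $\textrm{d}(v)$. Hence
\begin{equation*}
\mathbf{x}_v[\nabla f_{\textrm{P}}(\mathbf{x})]_v \ge 2\lambda > 0.
\end{equation*}
Equivalently, $\textrm{d}(v)-\sum_{u}\mathbf{x}_u\mathbf{x}_v = 2\,|\{u\in\mathcal{N}(v):\mathbf{x}_u\neq\mathbf{x}_v\}|\ge 0$, i.e., twice the number of cut edges incident to $v$.

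Third, I conclude with the case split on the sign of $\mathbf{x}_v$. If $\mathbf{x}_v=1$, the gradient component is positive, so $\mathbf{x}_v+\alpha[\nabla f_{\textrm{P}}]_v>1$ and it clips to $1$. If $\mathbf{x}_v=-1$, the component is negative, the sum is below $-1$, and it clips to $-1$. This holds for every $v$, so $\mathbf{x}=\Pi_{[-1,1]^n}(\mathbf{x}+\alpha\nabla f_{\textrm{P}}(\mathbf{x}))$.

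There is no real obstacle here; the only point needing care is the role of $\lambda>0$. Without it, a vertex with all neighbors on its own side gives a zero gradient component, which is the degenerate behaviour behind Proposition~\ref{prop: null space of L is cut 0 in f L}. With $\lambda>0$ the inequality is strict, so the gradient always points outward.
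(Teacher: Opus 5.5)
Your proof is correct and follows essentially the same route as the paper: both show $\mathbf{x}_v[\nabla f_{\mathrm{P}}(\mathbf{x})]_v \ge 0$ coordinatewise, using the identity that $\mathbf{x}_v(\mathbf{L}\mathbf{x})_v$ is twice the number of cut edges at $v$, and then clip each coordinate. Two small remarks. Your gradient $2(\mathbf{L}+\lambda\mathbf{I})\mathbf{x}$ has the correct constant, whereas the paper writes $\tfrac{\alpha}{2}$; only the sign matters, so this is harmless. Also, as your own opening plan (``or is zero'') already allows, nonnegativity suffices for the fixed-point conclusion, so strictness from $\lambda>0$ is not actually needed here.
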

While this perturbation removes undesirable interior stationary points, it introduces another limitation: PGA stalls at all binary points, including those that are $1$-flip repairable.

To address this, we consider an alternative formulation based on the adjacency matrix:
\begin{equation}
\label{eqn: adj cut formualtion}
f_\textrm{A}(\mathbf{x}) := -\mathbf{x}^T \mathbf{A} \mathbf{x}\:,
\end{equation}
which we refer to as the \textbf{adjacency formulation}. Over the discrete domain $\{-1,1\}^n$, $f_{\mathrm{A}}$ and $f_{\mathrm{L}}$ differ only by a constant, but their continuous relaxations behave differently.
\begin{proposition}
\label{prop: adj form is not 1-flip repair}
If $\mathbf{x}\in\{-1,1\}^n$ is a $1$-flip repairable point, then it is not an $f_\textrm{A}$-PGA-fixed point.
\end{proposition}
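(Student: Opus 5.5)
We compute the gradient directly and check coordinate $v$ only. Since $\mathbf{A}$ is symmetric, $\nabla f_{\mathrm{A}}(\mathbf{x}) = -2\mathbf{A}\mathbf{x}$, so the $v$-th coordinate of the gradient at a binary $\mathbf{x}$ is
\begin{equation*}
[\nabla f_{\mathrm{A}}(\mathbf{x})]_v = -2\sum_{u\in\mathcal{N}(v)} \mathbf{x}_u = -2\big(|\mathcal{N}(v)\cap S| - |\mathcal{N}(v)\cap \bar S|\big),
\end{equation*}
where $S=\{u:\mathbf{x}_u=1\}$. The plan is to show that the $1$-flip repairability condition at $v$ makes this coordinate push $\mathbf{x}_v$ strictly toward the interior of $[-1,1]$. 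Then the projection cannot absorb the step, and $\Pi(\mathbf{x}+\alpha\nabla f_{\mathrm{A}}(\mathbf{x}))_v \neq \mathbf{x}_v$ for every step size $\alpha>0$.

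First we translate repairability into a sign condition. By Definition~\ref{def: 1-flip} there is $v$ with $\mathbf{x}_v=1$ such that moving $v$ from $S$ to $\bar S$ increases the cut. Only edges incident to $v$ change status. Before the flip, the cut edges at $v$ number $|\mathcal{N}(v)\cap\bar S|$; after it, they number $|\mathcal{N}(v)\cap S|$. Hence $\texttt{Cut}(S')-\texttt{Cut}(S) = |\mathcal{N}(v)\cap S| - |\mathcal{N}(v)\cap\bar S| > 0$. Equivalently, $\sum_{u\in\mathcal{N}(v)}\mathbf{x}_u>0$, and therefore $[\nabla f_{\mathrm{A}}(\mathbf{x})]_v<0$. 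In fact this coordinate is at most $-2$, since the difference is a positive integer.

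Next we evaluate the projected step. Since $\mathbf{x}_v=1$, the update gives $\mathbf{x}_v + \alpha[\nabla f_{\mathrm{A}}(\mathbf{x})]_v < 1$. Clipping to $[-1,1]$ yields $\max\{-1,\,1+\alpha[\nabla f_{\mathrm{A}}]_v\}$, which is strictly less than $1$. So the $v$-th coordinate moves and $\mathbf{x}$ is not a PGA fixed point. This is the contrast with $f_{\mathrm{L}}$ and $f_{\mathrm{P}}$: for those objectives the diagonal term $\mathrm{d}(v)\mathbf{x}_v$ (respectively $(\mathrm{d}(v)+\lambda)\mathbf{x}_v$) dominates the neighbor sum and keeps the gradient pointing outward, as in Theorem~\ref{th: all fixed of perturbed laplacian}. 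The adjacency formulation has no diagonal term, so nothing offsets the neighbor sum.

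The main obstacle is mostly bookkeeping. One point is the sign convention: the cut is defined via $S$, and repairability is stated only for a node with $\mathbf{x}_v=1$. If one also wants the symmetric case $\mathbf{x}_v=-1$, the same computation gives a positive gradient coordinate and a strictly inward step. The other point is that for variants such as momentum or Adam the first step is still a positive multiple of the gradient, so the same conclusion holds at the first update. I would state the proof for plain PGA and remark on these variants.
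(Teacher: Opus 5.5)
Your proof is correct and follows essentially the same route as the paper. Both reduce $1$-flip repairability at $v$ to $\mathbf{x}_v(\mathbf{A}\mathbf{x})_v>0$: you do this by counting cut edges incident to $v$, while the paper expands $f_{\mathrm{A}}(\mathbf{x}-2\mathbf{x}_v\mathbf{e}_v)-f_{\mathrm{A}}(\mathbf{x})=4\mathbf{x}_v(\mathbf{A}\mathbf{x})_v$. Both then note that $\mathbf{x}_v\nabla_v f_{\mathrm{A}}(\mathbf{x})=-2\mathbf{x}_v(\mathbf{A}\mathbf{x})_v<0$ violates the binary fixed-point condition; your explicit evaluation of the clipped step is simply a concrete check of the condition $\mathbf{x}_i\nabla_i f\ge 0$ that the paper invokes.
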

Thus, the adjacency formulation improves exploration by allowing escape from at least $1$-flip repairable points. However, it introduces non-binary spurious interior stationary points:
\textcolor{black}{
\begin{proposition}
\label{prop: adj_fractional_localmin}
For the adjacency formulation in $f_\textrm{A}(\mathbf{x})$, there exists non-binary unwanted interior stationary points in $[-1,1]^n$.
\end{proposition}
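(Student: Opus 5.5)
Since $f_{\mathrm{A}}$ is a homogeneous quadratic, $\nabla f_{\mathrm{A}}(\mathbf{x}) = -2\mathbf{A}\mathbf{x}$. Interior stationary points are therefore exactly the points of $\ker(\mathbf{A})\cap(-1,1)^n$, together with boundary-touching points of $\ker(\mathbf{A})\cap[-1,1]^n$ that are fixed trivially because the gradient vanishes. The plan is to show this set contains non-binary points for suitable graphs. Note that $\mathbf{x}=\mathbf{0}$ always works: it is interior, non-binary, and satisfies $\nabla f_{\mathrm{A}}(\mathbf{0})=\mathbf{0}$. Since the statement asks only for existence, this already suffices, and the PGA update $\Pi(\mathbf{0}+\alpha\cdot\mathbf{0})=\mathbf{0}$ confirms that it is a fixed point. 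Thresholding at $0$ gives $S=\emptyset$, hence cut value zero, so this point is clearly unwanted.

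To make the result more meaningful, namely that these points are not just the trivial origin and form a nontrivial family, I would pick a graph whose adjacency matrix is singular. Take a single edge plus an isolated vertex, or more naturally the 4-cycle $C_4$. For $C_4$ with vertices ordered cyclically, $\mathbf{A}\mathbf{x}=\mathbf{0}$ means $\mathbf{x}_2+\mathbf{x}_4=0$ and $\mathbf{x}_1+\mathbf{x}_3=0$. Hence $\mathbf{x}=t(1,0,-1,0)+s(0,1,0,-1)$ with $|t|,|s|<1$ gives a two-parameter family of interior stationary points. Any choice with some coordinate strictly inside $(-1,1)$, for example $t=s=1/2$, is non-binary. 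For general graphs, $\ker(\mathbf{A})\neq\{\mathbf{0}\}$ whenever $\mathbf{A}$ is singular (e.g.\ bipartite graphs with unequal sides, or graphs with twin vertices), and scaling any kernel vector into $(-1,1)^n$ produces such points.

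I would also check second-order behaviour to justify calling them ``unwanted.'' The Hessian is $-2\mathbf{A}$, whose eigenvalues are $-2\mu_i$ with $\mu_i$ the eigenvalues of $\mathbf{A}$. Because $\mathrm{tr}(\mathbf{A})=0$, any graph with at least one edge has both positive and negative $\mu_i$, so these interior points are saddles, not maxima. PGA can still stall there exactly, since the gradient is zero, and that is the sense intended by the proposition.

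The only real subtlety is what counts as ``unwanted.'' If the trivial point $\mathbf{0}$ is deemed too cheap, the main step becomes exhibiting a graph with singular $\mathbf{A}$ and computing its kernel. The $C_4$ example handles this directly, and for even cycles it generalizes to alternating-sign patterns.
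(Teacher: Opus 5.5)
Your proof is correct, but it takes a different route from the paper. You read the statement literally ($\nabla f_{\mathrm A}(\mathbf{x})=-2\mathbf{A}\mathbf{x}=\mathbf{0}$ with $\mathbf{x}\in(-1,1)^n$), so your witnesses are the points of $\ker(\mathbf{A})\cap(-1,1)^n$:
\begin{itemize}
\item the origin works for every graph and thresholds to a zero cut, mirroring the null-space argument of Proposition~\ref{prop: null space of L is cut 0 in f L};
\item $C_4$ gives a nontrivial family;
\item the trace argument shows all such points are saddles.
\end{itemize}

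The paper instead starts from a binary global maximizer $\mathbf{s}$ that has a balanced vertex $v$, i.e.\ $\sum_{u\in\mathcal{N}(v)}\mathbf{s}_u=0$. Along the segment where $\mathbf{x}_v$ ranges over $[-1,1]$, $f_{\mathrm A}$ is constant. Multilinearity means the box maximum is attained at a vertex, so the whole segment consists of non-binary \emph{global maximizers}, e.g.\ $(t,1,-1)$ on $K_3$. Those points have a single fractional coordinate and a nonvanishing gradient: on $K_3$, $\nabla f_{\mathrm A}(t,1,-1)=(0,\,2(1-t),\,-2(1+t))$. So they are PGA fixed points rather than stationary points in your sense, and their thresholded cut is even optimal.

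What the paper's route buys is exactly the pathology that Theorem~\ref{thm: adj_lin_two_props}(ii) then removes for $f_{\mathrm B}$: non-binary maximizers. Your route proves the literal wording unconditionally, with no balanced vertex needed. However, your saddles do not give that contrast, because $f_{\mathrm B}$ also has interior zero-gradient saddles. On $K_3$, $\mathbf{x}=-\tfrac{\lambda}{4}\mathbf{1}$ gives $\nabla f_{\mathrm B}(\mathbf{x})=\mathbf{0}$ and thresholds to $S=\emptyset$. So your closing claim that stalling at such saddles is ``the sense intended'' does not match the paper's intent.

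Two small slips in the asides:
\begin{itemize}
\item The $C_4$ pattern extends only to $C_n$ with $4\mid n$. The kernel condition forces $\mathbf{x}_{i+2}=-\mathbf{x}_i$, so, for instance, $C_6$ has nonsingular adjacency.
\item ``Twin vertices'' should read non-adjacent twins.
\end{itemize}
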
}
%
%
Driven by the aforementioned characterization of the local maxima of different MaxCut objective functions, and towards maintaining (i) the elimination of unwanted stationary points (i.e., inescapable points with cut value of $0$), and (ii) the PGA ability to escape from repairable points, we propose a new objective that uses the adjacency matrix with a linear perturbation-based bias term: 
\begin{equation}
\label{eqn: adj_lin_two_props}
f_\textrm{B}(\mathbf{x}) := -\lambda \mathbf{1}^T\mathbf{x} - \mathbf{x}^T\mathbf{A}\mathbf{x}\:,
\end{equation}
which we term the \textbf{perturbed bias formulation}.


\begin{theorem}
\label{thm: adj_lin_two_props}
For the perturbed biased formulation in $f_\textrm{B}(\mathbf{x})$, set  $\lambda \in (0,2)$, then, the following hold:
(i) PGA does not stall at $1$-flip repairable points, and 
(ii) all local maximizers over $[-1,1]^n$ are binary.
\end{theorem}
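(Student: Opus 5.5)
The plan rests on two facts about $f_\textrm{B}$: its gradient is $\nabla f_\textrm{B}(\mathbf{x}) = -\lambda\mathbf{1} - 2\mathbf{A}\mathbf{x}$, and, because $\mathbf{A}$ has zero diagonal, $f_\textrm{B}$ is affine in each coordinate separately. Concretely, for each $v$ write
\[
c_v(\mathbf{x}) := \big(\nabla f_\textrm{B}(\mathbf{x})\big)_v = -\lambda - 2\sum_{u\in\mathcal{N}(v)} \mathbf{x}_u .
\]
Then $f_\textrm{B}(\mathbf{x}+t\mathbf{e}_v) = f_\textrm{B}(\mathbf{x}) + t\,c_v(\mathbf{x})$. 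For a general direction $\mathbf{d}$ the expansion is exact:
\[
f_\textrm{B}(\mathbf{x}+\varepsilon\mathbf{d}) = f_\textrm{B}(\mathbf{x}) + \varepsilon\,\nabla f_\textrm{B}(\mathbf{x})^T\mathbf{d} - \varepsilon^2\,\mathbf{d}^T\mathbf{A}\mathbf{d}.
\]

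\textbf{Part (i).} Let $\mathbf{x}\in\{-1,1\}^n$ be $1$-flip repairable, witnessed by a node $v$ with $\mathbf{x}_v=1$.
\begin{itemize}
\item Moving $v$ from $S$ to $\bar S$ changes the cut by $|\mathcal{N}(v)\cap S| - |\mathcal{N}(v)\cap\bar S| = \sum_{u\in\mathcal{N}(v)}\mathbf{x}_u$.
\item Repairability therefore means this sum is positive, so it is at least $1$ because it is an integer.
\item Hence $c_v(\mathbf{x}) \le -\lambda - 2 < 0$, so $\mathbf{x}_v + \alpha c_v(\mathbf{x}) < 1$.
\item This value also exceeds $-1$ for small $\alpha$, and in any case the projection returns a value strictly below $1$. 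The $v$-th coordinate moves, so $\mathbf{x}$ is not PGA-fixed.
\end{itemize}
This part only uses $\lambda > -2$.

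\textbf{Part (ii).} Suppose $\mathbf{x}$ is a local maximizer over $[-1,1]^n$ with some fractional coordinate $\mathbf{x}_v\in(-1,1)$; I will derive a contradiction.
\begin{itemize}
\item \emph{First-order conditions.} Since $f_\textrm{B}$ is affine along $\mathbf{e}_v$ and both directions are feasible, $c_v(\mathbf{x})=0$. Equivalently, $\sum_{u\in\mathcal{N}(v)}\mathbf{x}_u = -\lambda/2$.
\item \emph{$v$ has a neighbor.} If $v$ were isolated, then $c_v = -\lambda \neq 0$. So $\mathcal{N}(v)\neq\emptyset$.
\item \emph{Conditions at neighbors.} For each neighbor $u$: if $\mathbf{x}_u$ is interior, then $c_u=0$. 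If $\mathbf{x}_u=\pm1$, local optimality forces $c_u\mathbf{x}_u\ge 0$.
\end{itemize}

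The key observation, and the one place where $\lambda\in(0,2)$ is used, is that not every neighbor can satisfy ``$\mathbf{x}_u\in\{\pm1\}$ and $c_u\neq 0$''. If they all did, $\sum_{u\in\mathcal{N}(v)}\mathbf{x}_u$ would be an integer. But it equals $-\lambda/2\in(-1,0)$, which is not an integer. So there is a neighbor $u$ with $c_u(\mathbf{x})=0$, and $u$ is either interior or on the boundary.

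Now take the direction $\mathbf{d} = \sigma_v\mathbf{e}_v + \sigma_u\mathbf{e}_u$ with signs chosen as follows:
\begin{itemize}
\item $\sigma_u$ points into the box: $\sigma_u=-\mathbf{x}_u$ if $\mathbf{x}_u=\pm1$, and arbitrary if $u$ is interior.
\item $\sigma_v=-\sigma_u$, which is feasible because $\mathbf{x}_v$ is interior.
\end{itemize}
The first-order term is $\sigma_v c_v + \sigma_u c_u = 0$. Because $\mathbf{A}_{uv}=\mathbf{A}_{vu}=1$, we get $\mathbf{d}^T\mathbf{A}\mathbf{d} = 2\sigma_u\sigma_v = -2$. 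Therefore $f_\textrm{B}(\mathbf{x}+\varepsilon\mathbf{d}) = f_\textrm{B}(\mathbf{x}) + 2\varepsilon^2 > f_\textrm{B}(\mathbf{x})$ for all small feasible $\varepsilon>0$. This contradicts local maximality, so every local maximizer is binary.

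The main obstacle I expect is exactly the existence of such a ``flat'' neighbor $u$. The integrality argument above is my intended resolution. I would double-check the boundary case $c_u=0$ with $\mathbf{x}_u=\pm1$, where only the inward sign for $\sigma_u$ is allowed; this is harmless because $\sigma_v$ can take either sign.
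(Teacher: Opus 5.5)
Your proof is correct, and its ingredients are the same as the paper's. For (i) you use the sign of the gradient coordinate at a binary point. For (ii) you use first-order stationarity at interior coordinates, the clash between $-\lambda/2\notin\mathbb{Z}$ and integer neighbor sums, and negative curvature of $-\mathbf{x}^T\mathbf{A}\mathbf{x}$ along an edge.

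Where you differ is the organization of (ii). The paper takes the whole interior set $S$ and rules out $\mathbf{A}_{SS}=0$ by integrality. Otherwise it builds an ascent direction from a unit eigenvector for $\lambda_{\min}(\mathbf{A}_{SS})<0$, whose existence it gets from Perron--Frobenius and $\mathrm{tr}(\mathbf{A}_{SS})=0$. You instead localize at one interior vertex $v$ and use the explicit direction $\mathbf{e}_v-\mathbf{e}_u$, with $\mathbf{d}^T\mathbf{A}\mathbf{d}=-2$. This needs no spectral facts, folds the paper's two cases into one local argument, and is if anything cleaner.

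Your integrality step actually proves more than you state. If every neighbor of $v$ were binary, $\sum_{u\in\mathcal{N}(v)}\mathbf{x}_u$ would be an integer, so some neighbor $u$ is interior. Hence the boundary case with $c_u=0$ that you planned to double-check never arises. Your separate step showing that $v$ has a neighbor is also subsumed, since an empty sum is the integer $0$.

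For (i), you are right that under the paper's definition, which only moves nodes with $\mathbf{x}_v=1$, the bound $c_v\le-\lambda-2$ needs only $\lambda>-2$. The paper's estimate $\mathbf{x}_i\nabla_i f_{\mathrm B}(\mathbf{x})\le-2+\lambda$ also covers flipping nodes with $\mathbf{x}_i=-1$. That is the only place $\lambda<2$ enters part (i); in (ii), both arguments use only $\lambda/2\notin\mathbb{Z}$.
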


As $\lambda$ approaches zero (without being exactly zero), the minimizer of $f_\mathrm{B}$ converges to that of $f_\mathrm{A}$, i.e., the solutions that maximize the cut value.

In the numerical experiments, we set $\lambda\ll 1$ (see Appendix~\ref{appen: lambda ablation} for an ablation study). This formulation simultaneously eliminates undesirable stationary points and preserves the ability to escape repairable configurations. \textcolor{black}{In addition to Theorem~\ref{thm: adj_lin_two_props}, in Appendix~\ref{sec: append cut obj comparison}, we empirically shows the improvement achieved by using $f_{\mathrm B}$ compared to the other three MaxCut objectives.}

Finally, although PGA applied to $f_\textrm{B}$ avoids stalling at $1$-flip repairable points, it may still stall at $2$-flip repairable points (Definition~\ref{def: 2-flip}). To further enhance exploration, we introduce in the next section a stochastic, differentiable global reset mechanism that efficiently escapes such local maxima.

\section{Escaping local maxima via differentiable mutation-based search}\label{sec: proposed algorithm}



In this section, we present our proposed MIS and MaxCut algorithms, motivated by the need to increase the search radius, particularly when the corrections offered by discrete local search become computationally expensive or fail to provide improvements. 


After the optimizer becomes stuck at a binary local maximizer $\mathbf{x}$, with MIS size of $h(\mathbf{x})$ or cut value of $f_\textrm{L}(\mathbf{x})$, we use a stochastic conditional reset. First, we randomly (and hence stochastically) select a subset of nodes
\[
R\subset V,~|R| = \lfloor \rho n \rfloor,~\text{and}~ R~ \text{chosen uniformly at random},
\]
where $\rho \in (0,1)$ is the \textbf{reset parameter} that determines the percentage of nodes that will be reset. Second, we set all entries of $\mathbf{x}$ that belong to $R$ to $0$. In other words, we set $\mathbf{x}_v = 0, \forall v\in R$. 

Given some local maximizer $\mathbf{x}$, setting a subset of indices of $\mathbf{x}$ to $0$, for the MIS problem, corresponds to converting the Maximal IS to a non-maximal IS as the nodes with value $0$ still do not violate the independence property. 

For MaxCut, setting a subset of indices of $\mathbf{x}$ to $0$ corresponds to \textit{implicitly} setting these nodes to undecided states (neither belonging to $S$ nor to $\bar{S}$). In other words, $\mathbf{x}_v, \forall v\in R$, can evolve towards $+1$ or $-1$, i.e., the boundary of the box constraint, depending on the new (reset) initial point and the gradient direction that is a function of $\mathbf{x}$ and the objective function. 

After setting $\mathbf{x}_v = 0, \forall v\in R$, the gradient-based optimizer is run again from the updated initialization until converging to a (possibly different) local maximizer $\tilde{\mathbf{x}}$. The reset is \emph{accepted} only if it improves the discrete objective, i.e., $h(\tilde{\mathbf{x}}) > h(\mathbf{x})~~\text{for MIS},~~~\text{or}~~~ f_\textrm{L}(\tilde{\mathbf{x}}) > f_\textrm{L}(\mathbf{x})~~\text{for MaxCut},$
otherwise, a new random reset set is drawn from the current solution and the process repeats until the time budget expires. We note that the cut value is evaluated using $f_\textrm{L}$, while optimization is performed using $f_\textrm{B}$, since $f_\textrm{B}$ no longer directly corresponds to the cut value.

\paragraph{We provide a proof-of-idea analysis to demonstrate why our unbiased global reset is expected to outperform full random restarts.} This also illustrates why the advantage of global reset \emph{does not} require a \emph{biased} reset distribution. To keep the analysis simple, we use a separable objective. Consider the binary minimization problem
\[
    \min_{\mathbf{x}\in[-1,1]^n} F(\mathbf{x}),
    \qquad
    F(\mathbf{x})=\sum_{i=1}^n \phi(\mathbf{x}_i).
\]
Assume that \(\phi\) has exactly two local minima on \([-1,1]\), located at
\(-1\) and \(1\), and that \(1\) is the global minimizer: $ \phi(1)<\phi(-1).$
Assume further that \(0\) is an unstable stationary point separating the two
basins of attraction under gradient descent: an initialization in \((0,1]\)
converges to \(1\), while an initialization in \([-1,0)\) converges to \(-1\).
Thus \(0\) is an unbiased basin boundary for a symmetric reset distribution on
\([-1,1]\). 

We compare two strategies: A full random restart samples a new point uniformly
from \([-1,+1]^n\), and our unbiased global reset that 
starts from the current best point \(x^{\rm best}\), selects one coordinate
uniformly at random, resets it to a random number in \([-1,1]\), and
keeps the new point only if it improves the objective.
\begin{proposition}
    [High-probability convergence of unbiased global reset]
\label{prop:global_reset}
The global reset algorithm converges after $R \ge 2\log\!\left(\frac{n}{\delta}\right)$
global-reset rounds,  equivalently after \(m=Rn\) reset attempts, where the algorithm reaches the global minimizer $\mathbf{x}^\star=(1,\ldots,1)$
with probability at least \(1-\delta\). In contrast, full random restart hits \(\mathbf{x}^\star=(1,\ldots,1)\) in one round with probability \(2^{-n}\). Therefore, to reach \(\mathbf{x}^\star=(1,\ldots,1)^\star\) with probability at least \(1-\delta\), full random restart requires $ m_{\rm restart}
    \ge
    2^n\log\!\left(\frac{1}{\delta}\right)$
rounds up to constants.
\end{proposition}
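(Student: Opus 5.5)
We will reduce everything to independent coordinatewise events, using separability of $F$ and the basin structure of $\phi$. Since $F(\mathbf{x})=\sum_i \phi(\mathbf{x}_i)$, gradient descent acts independently on each coordinate. Any coordinate initialized in $(0,1]$ converges to $1$, and any coordinate initialized in $[-1,0)$ converges to $-1$; the value exactly $0$ has probability zero under the uniform reset and can be ignored. Consequently, after the optimizer converges, each reached point is binary. We therefore work directly with the binary state $\mathbf{x}^{\rm best}\in\{-1,1\}^n$, and the reset dynamics become a discrete process on coordinates.

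For the global reset we proceed in two steps.

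\textbf{Step 1 (monotonicity).} A single reset of coordinate $i$ resamples $\mathbf{x}_i\sim\mathrm{Unif}[-1,1]$ and then descends. The result is $1$ with probability $1/2$ and $-1$ with probability $1/2$, and no other coordinate changes. Suppose coordinate $i$ currently equals $-1$ and becomes $1$. The objective then decreases by $\phi(-1)-\phi(1)>0$, so the move is accepted. Suppose instead coordinate $i$ is already $1$. The new point is either identical or strictly worse, so the move is rejected (or leaves the point unchanged). Hence coordinates equal to $1$ are never lost, and the set of ``good'' coordinates grows monotonically.

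\textbf{Step 2 (union bound).} A fixed coordinate $i$ is selected in a single attempt with probability $1/n$, and it is fixed once it is selected and lands in $(0,1]$. The per-attempt probability of fixing $i$ is therefore $1/(2n)$. After $m=Rn$ independent attempts,
\[
\Pr[i \text{ still } -1]\le \Bigl(1-\tfrac{1}{2n}\Bigr)^{Rn}\le e^{-R/2}.
\]
A union bound over the $n$ coordinates gives failure probability at most $n e^{-R/2}$, and this is at most $\delta$ once $R\ge 2\log(n/\delta)$. At that point $\mathbf{x}^{\rm best}=\mathbf{x}^\star$, because it is binary with every coordinate equal to $1$.

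For full random restart, each restart draws $\mathbf{x}\sim\mathrm{Unif}[-1,1]^n$ and converges coordinatewise. It hits $\mathbf{x}^\star$ exactly when all $n$ coordinates are positive, which happens with probability $2^{-n}$ since the coordinates are independent. The restarts are i.i.d., so after $m$ rounds
\[
\Pr[\text{failure}]=(1-2^{-n})^m.
\]
Requiring $(1-2^{-n})^m\le\delta$ and using $-\log(1-p)\le p/(1-p)\le 2p$ for $p\le 1/2$ yields
\[
m\ge \frac{\log(1/\delta)}{-\log(1-2^{-n})}\ge \tfrac12\, 2^n\log(1/\delta),
\]
which is the claimed $2^n\log(1/\delta)$ up to constants. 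The bound can also be read as $m=\Omega(2^n\log(1/\delta))$ with a matching upper bound $\log(1-p)\le -p$.

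The main obstacle is to make the global-reset process rigorous. We must check that selecting coordinates with replacement across attempts yields genuinely independent per-attempt events for each fixed coordinate. We must also check that the acceptance rule (strict improvement) never discards progress. This is where the separability of $F$ and the strict inequality $\phi(1)<\phi(-1)$ are used. With these in hand, the coupon-collector style bound $(1-1/(2n))^{Rn}\le e^{-R/2}$ is routine.
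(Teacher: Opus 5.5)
Your proof is correct and follows the paper's argument. It uses the same three ingredients: a per-attempt correction probability of $1/(2n)$ for each wrong coordinate, the bound $(1-\tfrac{1}{2n})^{Rn}\le e^{-R/2}$ combined with a union bound over coordinates, and the $2^{-n}$ per-restart success probability. Two of your steps are small improvements: the explicit monotonicity step (accepted moves never undo a coordinate already at $1$), which the paper leaves implicit, and the use of $-\log(1-p)\le 2p$ to get a genuine necessity bound $m\ge 2^{n-1}\log(1/\delta)$, whereas the paper's bound $(1-2^{-n})^{m}\le \exp(-m2^{-n})$ only shows that $2^n\log(1/\delta)$ restarts suffice.
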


The proof is provided in Appendix ~\ref{appen: proofs}. This example shows that the reset can be much more effective than a pure random restart.

As the proposed global reset is a mutation-based technique, we term our method as Mutation-based Quadratic Optimization (\textbf{mQO}). The full procedure of applying mQO for MIS (resp. MaxCut), which we term as \textbf{mQO-MIS} (resp. \textbf{mQO-MaxCut}) is given in Algorithm~\ref{alg: main mQO-MIS} (resp. Algorithm~\ref{alg: main mQO-Cut}). We first provide an overall description of the algorithms, followed by the specific local search procedures. 





\begin{algorithm}[t]
\caption{\textbf{mQO-MIS}.}
\textbf{Input}: Graph $G$, initial $\mathbf{x}$, gradient operator $\mathcal{G}$ w.r.t. the MIS function $h$ in \eqref{eqn: MIS QUBO}, reset parameter $\rho$, step size $\alpha$, \textcolor{black}{number of global search rounds $T_{\mathrm{gs}}$}, and $I_{\textrm{sol}}=\{\cdot\}$ (current best set) \\
\vspace{1mm}
\textbf{Output}: The best obtained solution set $I_{\textrm{sol}}$\\
\vspace{1mm}
\small{01:} \textbf{While} time budget permits \\
\vspace{1mm}
\small{02:} \hspace{5mm}\textbf{Initialize} $\mathbf{x}$ as
$\mathbf{x} \leftarrow \Pi_{[0,1]^n}(\mathbf{d}_{\text{base}} + \boldsymbol{\epsilon})$,
where $\mathbf{d}_{\text{base},v} = 1 - \frac{\textrm{d}(v)}{\Delta}$ and
$\boldsymbol{\epsilon} \sim \mathcal{N}(0, \sigma^2 \mathbf{I})$ \\
\vspace{1mm}
\small{03:} \hspace{5mm}\textbf{Obtain} $\mathbf{x}\leftarrow \Pi_{[0,1]^n}(\mathbf{x}+\alpha \mathcal{G})$ until $I = \{v\in V : \mathbf{x}_v > 0.5\}$ is MIS, detected via \eqref{eqn: MIS checking} with binarized $\mathbf{x}$ \\
\vspace{1mm}
\small{04:} \hspace{5mm}\textbf{Obtain} $I = \{v\in V : \mathbf{x}_v = 1\}$ and \textbf{update} best set as $I_{\textrm{sol}} \leftarrow I$ only \textbf{If} $|I_{\textrm{sol}}| < |I|$ \\
\vspace{1mm}
\small{05:} \hspace{5mm}\textbf{For} $T_{\mathrm{gs}}$ rounds (global reset loop) \\
\vspace{1mm}
\small{06:} \hspace{12mm}\textbf{Obtain} $\mathbf{x}$ from $I_{\textrm{sol}}$ such that each coordinate is obtained as $\mathbf{x}_v = \mathds{1}(v\in I_{\textrm{sol}})$\\
\vspace{1mm}
\small{07:} \hspace{12mm}\textbf{Obtain} set $R\subset V$ by randomly choosing $\lfloor\rho n\rfloor$ indices\\
\vspace{1mm}
\small{08:} \hspace{12mm}\textbf{Set} $\mathbf{x}_v \leftarrow 0,\ \forall v\in R$ (reset for global search)\\
\vspace{1mm}
\small{09:} \hspace{12mm}\textbf{Obtain} $\mathbf{x}\leftarrow \Pi_{[0,1]^n}(\mathbf{x}+\alpha \mathcal{G})$ until a MIS is detected (optimizer gradient updates)\\
\vspace{1mm}
\small{10:} \hspace{12mm}\textbf{Obtain} $I' = \{v\in V : \mathbf{x}_v > 0\}$ and \textbf{update} best set as $I_{\textrm{sol}} \leftarrow I'$ only \textbf{If} $|I_{\textrm{sol}}| < |I'|$\\
\vspace{1mm}
\small{11:} \hspace{5mm}\textbf{Obtain} $I \leftarrow \texttt{OneTwoSwap}(I_{\textrm{sol}},G)$ and \textbf{update} $I_{\textrm{sol}} \leftarrow I$ only \textbf{If} $|I_{\textrm{sol}}| < |I|$ (local search)\\
\vspace{1mm}

\vspace{-3.5mm}
\label{alg: main mQO-MIS}
\end{algorithm}

For MIS, as was done in \cite{alkhouri2022differentiable}, we use a degree-based initialization to initialize vector $\mathbf{x}$ (given as $\mathbf{d}_{\textrm{base}}$ Step~2) in Algorithm~\ref{alg: main mQO-MIS} based on the intuition that nodes with higher degree are less likely to be in the independent set. 

Step~3 performs the gradient updates of the optimizer w.r.t. \eqref{eqn: MIS QUBO} until a MIS is detected, for which, we check whether the set $I$ extracted from $\mathbf{x}$ corresponds to a MaxIS using the gradient-based MIS verification method proposed in \cite{alkhouri2025differentiable} (\textcolor{black}{by checking if the binary converged point is a PGA-fixed point}), that avoids iterating over all nodes and relies only on matrix–vector multiplications. More specifically, given a binary $\mathbf{x}$, we check whether the following returns True of False
\begin{equation}\label{eqn: MIS checking}
 \mathds1\big\{ \mathbf{x} = \Pi_{[0,1]^n}(\mathbf{x} + \alpha \nabla h(\mathbf{x}) ) \big\}\:.   
\end{equation}
In Step~4, we update the best set if a better MIS is found. Steps~5 to 10 apply the stochastic global reset is applied for multiple round, indicated as $T_{\textrm{gs}}$). More specifically, in Step 7, we randomly select the set $R$, then in Step~9 and Step~10, we run the gradient updates again and update the best set if the converged one is of larger size. In Step~11, the iterative $(1,2)$-swap local search is applied. If the resulting set has higher cardinality, then the best set is updated. 



For MaxCut, the initialization of $\mathbf{x}$ in Algorithm~\ref{alg: main mQO-Cut} (which is deferred to Appendix~\ref{appen: implementation of pmQO}) is also a degree-based initialization strategy since high-degree nodes have greater influence on the cut value when switching between two partitions (as described in Step~2). \textcolor{black}{The remaining procedure of MaxCut is similar to Algorithm~\ref{alg: main mQO-MIS}, with few changes: (i) We use $f_\textrm{B}$; (ii) A checker is not needed as any binary vector (that is not $\mathbf{1}_n$ or $-\mathbf{1}_n$) returns a non-zero cut value, therefore, the gradient updates are run until convergence; (iii) We use a combination of one $1$- and $2$-flip local search procedures.} 


Next, we describe how we apply the local search procedures on MIS (resp. Maxcut) $I$ (resp. $S$), denoted in our algorithms as $\texttt{OneTwoSwap}(I,G)$ (resp. $\texttt{OneTwoFlip}(S,G)$).  For MIS, we iterate over all $v\in V$ and check if the condition of $(1,2)$-swap repairability applies. If it does, we make the replacement and repeats iterating over all the nodes again. This is run until the $(1,2)$-swap rule no longer applies. This procedure is introduced in \cite{andrade2012fast}, and utilized in many later studies including the well-known heuristic, ReduMIS \cite{lamm2016finding}. For MaxCut, we apply an iterative local search that \textcolor{black}{alternatively} applies the $2$-flip procedure (from Definition~\ref{def: 2-flip}) that involves the $1$-flip procedure (from Definition~\ref{def: 1-flip}). \textcolor{black}{While we have shown that PGA can escape $1$-flip repairable points when we use $f_\textrm{B}$, this $1$-flip is a part of the $2$-flip algorithm, applied outside the global reset loop in Step~11.}


\subsection{Relation to closely related algorithms}
\textcolor{black}{Our algorithms can be viewed as a stochastic search algorithm inspired by classical evolutionary algorithms (EAs)~\cite{back2018evolutionary}. However, 
unlike most EA-based or EA-inspired approaches, our search method is based on re-initializing the optimization vector $\mathbf{x}$ and using differentiable updates to obtain solutions.}

\textcolor{black}{Our algorithms have some similarity with breakout-style methods (such as \cite{benlic2013breakout}) through occasionally setting a subset of nodes to $0$ to escape local optima. However, a key distinction is that mQO mainly operates over a continuous differentiable optimization, rather than purely discrete local search. This enables exploration in a higher-dimensional landscape, where stagnation corresponds to stationary points or fixed points of the relaxed objective rather than the absence of improving discrete moves.
The mutation step in mQO performs a stochastic reset over a subset of variables, followed by a discrete step for repairable points. This enables large subspace transitions, reducing the time spent in individual basins and improving exploration efficiency on large graphs.
To demonstrate the improvements gained from our differentiable updates compared to discrete local search, we provide the study in Appendix~\ref{sec: appen diff gain}. We also refer the reader to the experiment of Appendix~\ref{sec: appen ablation global} (resp. Appendix~\ref{sec: appen ablation local}) that shows the impact of the global reset (resp. local search) in mQO.}

Our algorithms and the parallelized algorithms in \cite{alkhouri2025differentiable,alkhouri2026MaxCut} are similar in terms of using gradient updates on box-constrained quadratic objectives. However, the main difference is in the exploration of the search space. More specifically, in \cite{alkhouri2025differentiable,alkhouri2026MaxCut}, a batch of initializations are run in parallel using GPUs, whereas mQO explores the search space using a sequential mutation-based CPU-only approach. This parallelization, in addition to being computationally more expensive, introduces additional hyper-parameters such as the batch size, which determines the number of iterations for the gradient updates. That said, our method can be extended to GPUs \textcolor{black}{as we will show in the next section} by running different initializations with each creating their trajectory. \textcolor{black}{While the GPU version of mQO (which we term by parallelized mQO, or pmQO), can, in general, achieve better results, mQO is able to scale up and be competitive with SOTA heuristics which are CPU-based and widely regarded as SOTA for MIS and MaxCut.}



\begin{table*}[t]
\centering
\caption{\small{Main comparison results across various densities (given as $d/n$), averaged over 8 random ER seeds for smaller graphs, and 3 seeds for large-scale instances ($n \ge 30000$). ``--'' indicates that a method is not able to return a solution given the time limit. Bold (resp. underlined) numbers reflect the highest (resp. second highest) results. \colorbox{gray!20}{Gray-} (resp. \colorbox{blue!20}{blue-}) shaded methods use a CPU (resp. CPU + GPU).}}

\begin{subtable}[t]{\textwidth}
\centering
\caption{MIS problem with a 5-minute budget.}
\label{tab:er_comparison mis}

\vspace{-0.1cm}

\begin{small}
\begin{sc}
\resizebox{0.98\textwidth}{!}{
\begin{tabular}{l|cccccc|cc}
\toprule
$(n, d)$ & \colorbox{gray!20}{Greedy-MIS} & \colorbox{gray!20}{CQO} & \colorbox{blue!20}{pCQO} & \colorbox{gray!20}{Gurobi} & \colorbox{gray!20}{ReduMIS} & \colorbox{blue!20}{RLSA} & \colorbox{gray!20}{mQO-MIS} & \colorbox{blue!20}{pmQO-MIS}\\
\midrule
$(1000, 100)$  & 55.00 & 57.86  & 59.25 & 59.50  & \textbf{67.00} & 65.25 & 66.38 & \underline{66.75}\\
$(1000, 300)$ & 22.50  & 23.13  & 24.38 & 22.12  & \textbf{25.75} & 23.38 & \underline{25.38} & \underline{25.38} \\
$(1000, 500)$ & 14.00  & 14.25  & 14.75 & 13.25  & \textbf{15.00}  & 12.88 & 14.75 & \underline{14.88} \\
$(3000, 100)$  & 158.88 & 162.88 & 164.25 & 142.75 & \underline{201.00} & 195.25 & 198.25 & \textbf{201.63}\\
$(3000, 300)$ &  65.25 & 67.75  & 68.88 & 54.38  & \textbf{81.00}  & 78.13 & 79.63 & \underline{80.88}\\
$(3000, 1000)$ & 23.00 & 23.38  & 24.38 & 17.63  & 25.88 & 24.00  & \underline{26.00} & \textbf{26.38}\\
$(10000, 5000)$ & 16.88 & 16.13 & 17.13 & 13.50 & -- & 16.13  & \underline{17.38} & \textbf{18.13}\\
$(20000, 10000)$ & 17.63  & 16.38    & 17.75 & 14.00    & -- & -- & \underline{18.25} & \textbf{19.00} \\
$(30000, 15000)$ & 18.13 & 16.88    & 17.88  & 14.50    & -- & -- & \underline{18.63} & \textbf{19.13} \\
\bottomrule
\end{tabular}}
\end{sc}
\end{small}
\end{subtable}

\vspace{0.2cm}

\begin{subtable}[t]{\textwidth}
\centering
\caption{MaxCut problem with a 10-minute budget.}
\label{tab:er_comparison cut}

\vspace{-0.01cm}

\begin{small}
\begin{sc}
\resizebox{0.98\textwidth}{!}{
\begin{tabular}{l|cccc|cc}
\toprule
$(n, d)$ & \colorbox{gray!20}{Greedy} & \colorbox{gray!20}{Gurobi} & \colorbox{gray!20}{BLS} & \colorbox{blue!20}{RLSA} & \colorbox{gray!20}{mQO-MaxCut} & \colorbox{blue!20}{pmQO-MaxCut}\\
\midrule
$(100, 50)$  & \underline{1428.125} & \textbf{1432.625} & \textbf{1432.625} & \textbf{1432.625} & \textbf{1432.625} & \textbf{1432.625} \\
$(1000, 100)$ & 27755.625 & \textbf{28534.75} & 28506.625 & 28525.50 & 28518.375 & \underline{28530.875}\\
$(1000, 500)$ & 129636.625 & \textbf{131009.875} & 130974 & 130989.75 & 130997.375 & \underline{131009.125}\\
$(1000, 800)$ & 203632.125 & \textbf{204795.75} & 204749.125 & 204785.00 & 204787.125 & \underline{204789.25}\\
$(30000, 15000)$ & 113194512.33 & 113446031.67 & 113446464.00 & -- & \underline{113466713.00} & \textbf{113468085.33} \\
$(30000, 24000)$ & 180556505.67 & 180764752.00 & 180745762.00 & -- & \underline{180774839.67} & \textbf{180775568.00} \\
$(40000, 20000)$ & 201071045.67 & -- & \underline{201409886.67} & -- & \textbf{201488360.67} & \textbf{201488360.67} \\
$(40000, 32000)$ & 320857047.00 & -- & \underline{321133893.00} & -- & \textbf{321192486.67} & \textbf{321192486.67} \\
\bottomrule
\end{tabular}}
\end{sc}
\end{small}
\end{subtable}

\end{table*}
\begin{table*}[t]
\centering
\caption{\small{Time to obtain the first feasible solutions for MIS and MaxCut on large-scale graphs.}}
\label{tab:first_feasible}
\vspace{-0.2cm}
\begin{subtable}[t]{0.48\textwidth}
\centering
\caption{\small{MIS results averaged over 8 large-scale graphs with $(n,d)=(30000,15000)$.}}
\label{tab:first_mis}
\vspace{-0.2cm}

\resizebox{0.9\textwidth}{!}{
\begin{tabular}{lcc}
\toprule
Method & Time (s) $\downarrow$ & MIS Size $\uparrow$ \\
\midrule
Greedy-MIS         & 16.20 & 15.13 \\
CQO                & 157.31 & 14.00 \\
Gurobi             & 154.55 & 14.50 \\
ReduMIS            & 3743.18 & 19.13 \\
mQO-MIS (Ours)     & 119.02 & 18.50 \\
\bottomrule
\end{tabular}}
\end{subtable}
\hfill
\begin{subtable}[t]{0.48\textwidth}
\centering
\caption{\small{MaxCut results averaged over 3 large-scale graphs with $(n,d)=(40000,20000)$.}}
\label{tab:first_maxcut}
\vspace{-0.2cm}

\resizebox{0.9\textwidth}{!}{
\begin{tabular}{lcc}
\toprule
Method & Time (s) $\downarrow$ & Cut Value $\uparrow$ \\
\midrule
Greedy            & 98.32 & 201059227.00 \\
BLS               & 436.51 & 201409886.67 \\
Gurobi            & 674.33 & 201457395.00 \\
mQO-MaxCut (Ours) & 587.88 & 201488360.67 \\
\bottomrule
\end{tabular}}
\end{subtable}
\vspace{-0.5cm}
\end{table*}

\section{Numerical Results}


In this section, we evaluate all methods on Erd\H{o}s--R\'enyi (or ER) random graphs parameterized by $(n, d)$, where $n$ denotes the number of nodes (graph order) and $d$ denotes the average node degree. ER graphs are often parameterized by either $d$ or the probability of edge creation, $p = d/n$, indicating the graph density. 
For each graph configuration, we use different seeds for testing. Each seed is executed independently for a time budget. \textcolor{black}{Additional comparison results using benchmark datasets DIMACS \cite{johnson1996cliques}, and graphs with different structures such as Random satisfaction Benchmark (RB) from \cite{wu2025unrealized}, Barbasi-Albert (BA) \cite{barabasi1999emergence}, and Stochastic Block Models (SBMs) \cite{holland1983stochastic} (as graphs with community structures) are given in Appendix~\ref{sec: append additional comparisons}. }





For MIS baselines, we first include four CPU baselines: (i) Greedy-MIS, which constructs solutions by iteratively selecting (at random) lower-degree nodes (which is used in many methods including \cite{chaplick2025approximation}) as this represents simple and an efficient greedy method; (ii) Clique-based Quadratic Optimization (CQO) \cite{alkhouri2025differentiable}, representing differentiable quadratic methods, where we restart $\mathbf{x}$ every time the optimizer is stuck (where hyper-parameters are used following the paper's recommendation for different ER graphs); (iii) Gurobi \cite{Gurobi}, representing commercialized ILP packages to solve \eqref{eqn: MIS ILP}; (iv) ReduMIS \cite{lamm2016finding}, representing a powerful heuristic that is widely regarded to return the best solutions on different settings and graphs including learning-based methods (see the recent study in \cite{wu2025unrealized}). We note that ReduMIS also utilizes the iterative $(1,2)$-swap local search from \cite{andrade2012fast}. \textcolor{black}{Second, we include two recent GPU-accelerated methods that demonstrated very strong performance against many baselines including the training-data-intensive ones: (i) the parallelized (i.e., full) version of CQO, which is pCQO \cite{alkhouri2025differentiable}; (ii) Regularized Langevin dynamics with Simulated Annealing (RLSA) \cite{fengregularized} to represent recent powerful parallelized sampling methods.}




For MaxCut, we compare to three CPU baselines: (i) The well-known Greedy MaxCut algorithm \cite{williamson2011design}, representing approximation algorithms, where we fix an ordering on the vertices in $V$, then, we start with two empty bins $S$ and $\bar{S}$, and put $v_1$ in $S$, then, for each subsequent vertex, we place it in the bin such that $\texttt{Cut}(S)$ is maximized (other random orderings are used consecutively until the time budget expires); (ii) Gurobi, which is used to solve the ILP in \eqref{eqn: cut ILP}; (iii) the Breakout local search (BLS) heuristic method in \cite{benlic2013breakout}, which is among the best known methods for solving the MaxCut problem as was recently demonstrated in \cite{zeng2024performance}. We note that BLS uses the iterative $1$-flip local search. Furthermore, we include the MaxCut version of the GPU-accelerated method, RLSA \cite{fengregularized}. 



For our mQO-MIS, pmQO-MIS, similar to \cite{alkhouri2025differentiable}, we use momentum-based gradient ascent (MGA). For mQO-MaxCut and pmQO-MaxCut, MGA is also used following the ablation study in Appendix~\ref{appen: ablation opt choice}. \textcolor{black}{MGA parameters are set as given in Appendix~\ref{appen: hyperparameter settings}.}
The reset parameter, $\rho$, is set according to the ablation study in Appendix~\ref{sec: appen ablation global}. \textcolor{black}{The number of global reset rounds is set to $T_{\mathrm{gs}} = 60$ (resp. $T_{\mathrm{gs}} = 90$) for the MIS (resp. MaxCut) problem. For $\lambda$ in $f_\textrm{B}$, we set it to \textcolor{black}{0.001}, following the ablation study in Appendix~\ref{appen: lambda ablation}. For the large-scale ER graphs, the MAxCut formulation, $f_\textrm{B}$, uses sparse matrix representation for $\mathbf{A}$.} All experiments and methods are run on a server equipped with a single CPU AMD EPYC 9565 processor. For the pmQO algorithms and the other GPU baselines, we combine the aforementioned CPU with \textcolor{black}{an NVIDIA RTX PRO 6000 Blackwell GPU.} \textcolor{black}{Our anonymized code is available online\footnote{\tiny{\url{https://anonymous.4open.science/r/Mutation-Guided-Differentiable-CO-7BA0/}}}.}





We present our main results using ``small'' ($n\in \{100, 1000, 3000\}$) and ``large'' ($n\in \{10000, 20000, 30000, 40000\}$) ER graphs with varying densities. For MIS, Table~\ref{tab:er_comparison mis} presents the results under $5$ minutes. For smaller graphs, other than $(n,d) = (3000,100)$, ReduMIS returns the highest MIS sizes as compared to all CPU and GPU methods, with mQO and pmQO achieve the second highest. \textcolor{black}{However, for the large graphs ($(n,d) \geq (3000,1000)$), our GPU version, pmQO, achieves the best results, while mQO achieves the second highest. Furthermore, our CPU-only version, mQO, obtains better results than the two recent GPU methods, pCQO and RLSA, all without the need of parallelization, highlighting the scalable performance of Algorithm~\ref{alg: main mQO-MIS}.}

Table~\ref{tab:er_comparison cut} presents comparison results for the MaxCut problem with a time budget of $10$ minutes. Under the 10-minute budget, our methods achieve the best (on-par) results only for $(n,d) = (100,50)$, while our GPU version obtains the second-best results for ER graphs with $n\leq 1000$. In this regime, Gurobi achieves the best overall performance. \textcolor{black}{Furthermore, our CPU-only version, $n=1000$ and $d\in \{500,800\}$, obtains better results that the GPU method, RLSA (the GPU-only simulated annealing method), all without parallelization. For the large-scale graphs, our GPU and CPU version achieve the best and second-best results, demonstrating the scalability of mQO. }

Table~\ref{tab:first_mis} presents the minimum time required to obtain a feasible MIS solution across different CPU methods using the seeds of graph $(n,d) = (30000,15000)$. In under five minutes, our method, Gurobi, and CQO all return valid solutions, but our method reports the highest MIS size. ReduMIS requires more than one hour to return the first valid solution, and it is slightly higher that over result. Table~\ref{tab:first_maxcut} reports the minimum time required to return a MaxCut solution for the large graphs of $(n,d) = (40000, 20000)$. The Greedy method (due to its simplicity) returns the fastest solution. \textcolor{black}{However, within 10 minutes, mQO-MaxCut is able to return a better solution, even as compared to BLS. Gurobi requires more than $11$ minutes and obtains a lower cut value when compared to mQO-MaxCut.} 

Overall, the main results indicate that our algorithms are particularly well-suited for obtaining fast solutions on large-scale graphs. \textcolor{black}{These claims are further supported in the additional comparison results (across different datasets and graph structures) of Appendix~\ref{sec: append additional comparisons}.}

\section{Conclusion}

In this paper, we analyzed the limitations of gradient-based methods for relaxed box-constrained QUBO formulations of the Maximum Independent Set and Maximum Cut problems, and showed that stalling at local maxima is a fundamental bottleneck that cannot be resolved by tuning and optimizer choice and required GPU parallelization. For MaxCut, we identified spurious stationary points induced by the Laplacian formulation and propose a new formulation that provably eliminated non-binary local maxima. Motivated by these insights, we introduced mQO, a single-trajectory framework that combines differentiable optimization, discrete local search, and stochastic mutation-based resets to expand the search radius when optimization stalls. Experiments on ER graphs demonstrate that mQO achieves competitive or superior performance on large-scale instances under strict time budgets using only CPU resources, highlighting that principled mechanisms for escaping stagnation can be more effective than increasing computational power. Limitations and future work are discussed in Appendix~\ref{sec: append limiations and future work}. 

\bibliography{refs}
\newpage
\appendix
\onecolumn
\par\noindent\rule{\textwidth}{1pt}
\begin{center}
{\Large \bf Appendix}
\end{center}
\vspace{-0.1in}
\par\noindent\rule{\textwidth}{1pt}
\appendix

\section{Proofs}\label{appen: proofs}


\subsection{Proposition~\ref{prop: PGA is stuck at (1,2)}}
\begin{proof}
The proof follows from showing that for any maximal independent set, including the ones where the $(1,2)$-swap in Definition~\ref{def: 1,2 swap} applies, the gradient direction always points towards outside the boundary $[0,1]^n$, and by the projection operator, we get the same point.  

Let $I$ be a MaxIS and define binary vector $\mathbf{x}$ such that it contains ones only at indices corresponding to the nodes in set $I$. For any $I$ (including the $(1,2)$-swap MaxIS), we have 
\begin{equation}\label{eqn: der MIS}
   \frac{\partial h(\mathbf{x})}{\partial\mathbf{x}_v} \leq 0, \forall v\in I, ~~~\text{and}~~  \frac{\partial h(\mathbf{x})}{\partial\mathbf{x}_v} \geq 0, \forall v\notin I. 
\end{equation}
%
 The first equation is derived because if $v\notin I$, then $\mathbf{x}_v=0$ (by the definition of $\mathbf{x}$) so it is at the left boundary of the interval $[0,1]$. For the left boundary point to be a local minimizer, it requires the derivative to be non-negative (i.e., moving towards the right only increases the objective). Similarly, when $v\in I$, $\mathbf{x}_v=1$, is at the right boundary for \eqref{eqn: der MIS}, at which the derivative should be non-positive. 
 
The derivative of $h$ computed as 
\begin{equation}\label{eq:expre_f}
\frac{\partial h(\mathbf{x})}{ \partial \mathbf{x}_v}  = 1 - \gamma m_v, \quad \forall v \notin I,
\end{equation}
where
\begin{equation}\label{eq:mv}
m_v:= \left|\{u\in \mathcal{N}(v) \cap I \}\right|\:,
\end{equation}
is the number of neighbors of $v$ in $I$. By this definition, we immediately have $1\leq m_v \leq |I| $, where the upper and lower bounds for $m_v$ are all attainable by some special graphs. Note that the lower bound of $m_v$ is $1$, and that is due the fact that $I$ is a MaxIS, so any other node (say $v$) will have at least $1$ edge connected to a node in $I$. Plugging \eqref{eq:expre_f} into the first equation of \eqref{eqn: der MIS}, we obtain
%
\begin{equation} \label{eq:gamma_Q}
\gamma \geq \frac{1}{m_v}\:.
\end{equation} 
Since we have $\gamma>1$, we must set $m_v$ to its lowest possible value, $1$ (attainable by some graphs), and still requires $\gamma$ to satisfy \eqref{eq:gamma_Q}. This means that for any MaxIS $I$ and $\gamma>1$, the gradient direction for all nodes is towards infeasible points w.r.t. the box-constraints. \end{proof}


\subsection{Proposition~\ref{prop: null space of L is cut 0 in f L}}
\begin{proof}

From $f_\textrm{L}$ in \eqref{eqn: Cut QUBO}, we have $\nabla f_\textrm{L} = \mathbf{L}\mathbf{x}.$
Setting this to $\mathbf{0}$ indicates that any point in null space of $\mathbf{L}$ is a stationary point. Given the row-sum property of $\mathbf{L}$ \cite{chung1997spectral} (each row or column sums up to 0), the null space spans the all-ones vector, $\mathbf{1}_n$ . This means that any point $c\mathbf{1}_n$, with $c\in \mathbb{R}$, is a stationary point that yields $\nabla f_\textrm{L}(\mathbf{x}) = \mathbf{0}$ and hence is a PGA fixed point. 

For $c \in \{-1,1\}$, we have $\mathbf{1}_n$ and $-\mathbf{1}_n$ that are binary and fixed points. According to the solution inclusion criterion in $S = \{v\in V : \mathbf{x}_v>0\}, ~\mathbf{x}\in [-1,1]^n$, we will have either $S = V$ or $S = \emptyset$, which both return $\texttt{Cut}(S) = 0$ according to Definition~\ref{def: 1-flip}. This applies for $c\in (0,1)$ with a difference that in this case, the stationary points are in the interior. This concludes the proof. 
\end{proof}


\subsection{Theorem~\ref{th: all fixed of perturbed laplacian}}

\begin{proof}
Write 
\[
\mathbf{z}:=\mathbf{x}+\alpha\nabla f_{\mathrm P}(\mathbf{x})=\mathbf{x}+\frac{\alpha}{2}(\mathbf{L}+\lambda \mathbf{I})\mathbf{x}\:.
\]
For any $i$ and any binary $\mathbf{x}$, using $\mathbf{L}=\mathbf{D}-\mathbf{A}$ with nonnegative weights, we obtain
\[
(\mathbf{L}\mathbf{x})_i=\sum_{j} \mathbf{A}_{ij}(\mathbf{x}_i-\mathbf{x}_j)=2\mathbf{x}_i\sum_{j:\,\mathbf{x}_j\neq \mathbf{x}_i}\mathbf{A}_{ij},
\]
which means that $\mathbf{x}_i(\mathbf{L}\mathbf{x})_i\ge 0$. Therefore $\mathbf{x}_i\bigl((\mathbf{L}+\lambda \mathbf{I})\mathbf{x}\bigr)_i=\mathbf{x}_i(\mathbf{L}\mathbf{x})_i+\lambda \ge \lambda \ge 0.$

If $\mathbf{x}_i=1$, then, 

\[\mathbf{z}_i=1+\frac{\alpha}{2}\bigl((\mathbf{L}+\lambda \mathbf{I})\mathbf{x}\bigr)_i\ge 1,\] hence $\Pi_{[-1,1]}(\mathbf{z}_i)=1=\mathbf{x}_i$.

If $\mathbf{x}_i=-1$, then, $\mathbf{z}_i\le -1$, hence $\Pi_{[-1,1]}(\mathbf{z}_i)=-1=\mathbf{x}_i$.

Thus, $\Pi_{[-1,1]^n}(\mathbf{z})=\mathbf{x}$.
\end{proof}
%


\subsection{Proposition~\ref{prop: adj form is not 1-flip repair}}

\begin{proof}
For binary $\mathbf{x}$ and a flip index $i$, let \[\mathbf{x}^{(i)}=\mathbf{x}-2\mathbf{x}_i \mathbf{e}_i\] be the $1$-flip repairable point from Definition~\ref{def: 1-flip}. A direct expansion gives the $1$-flip gain:
\[
f_{\mathrm{\mathbf{A}}}(\mathbf{x}^{(i)})-f_{\mathrm{\mathbf{A}}}(\mathbf{x})
= -(\mathbf{x}-2\mathbf{x}_i\mathbf{e}_i)^T \mathbf{A}(\mathbf{x}-2\mathbf{x}_i\mathbf{e}_i)+\mathbf{x}^T \mathbf{A} \mathbf{x}
= 4\mathbf{x}_i(\mathbf{A}\mathbf{x})_i,
\]
using $\mathbf{A}_{ii}=0, \forall i\in V$. If $\mathbf{x}$ is $1$-flip repairable, then for some $i$ we have $4\mathbf{x}_i(\mathbf{A}\mathbf{x})_i>0$. But \[\nabla f_{\mathrm{\mathbf{A}}}(\mathbf{x})=-2\mathbf{A}\mathbf{x},\] so \[\mathbf{x}_i\nabla_i f_{\mathrm{\mathbf{A}}}(\mathbf{x})=-2\mathbf{x}_i(\mathbf{A}\mathbf{x})_i<0.\]

At a binary point, the PGA fixed-point conditions require $\mathbf{x}_i\nabla_i f(\mathbf{x})\ge 0$ for all $i$
(because at $\mathbf{x}_i=1$, one needs $\nabla_i f\ge 0$, and at $\mathbf{x}_i=-1$ one needs $\nabla_i f\le 0$).
Thus $\mathbf{x}$ cannot be a fixed point of PGA, hence $\Pi_{[-1,1]^n}(\mathbf{x}+\alpha \nabla f_{\mathrm{\mathbf{A}}}(\mathbf{x}))\neq \mathbf{x}$.
\end{proof}
%


\subsection{Proposition~\ref{prop: adj_fractional_localmin}}

    

\begin{proof}
Let $\mathbf{s} \in \{-1,1\}^n$ be a binary global maximizer of $f_{\mathbf{A}}$, $s_u$ denote the assignment of vertex $u$, and suppose $G$ has a vertex $v$ whose neighbors in $\mathbf{s}$ are split
evenly, i.e., \[\sum_{u \in N(v)} s_u = 0.\] Define $\mathbf{x}^{(t)}$ by
replacing $s_v$ with $t \in [-1,1]$ and leaving the other coordinates
unchanged. Since only edges incident to $v$ depend on this coordinate,
\[
f_{\mathbf{A}}(\mathbf{x}^{(t)}) - f_{\mathbf{A}}(\mathbf{s})
= -2(t - s_v)\sum_{u \in N(v)} s_u
= 0,
\]
so $f_{\mathbf{A}}$ is constant in $t$ along this segment. Multilinearity of $f_{\mathbf{A}}$ implies that its maximum over $[-1,1]^n$ is attained at a binary vertex of the box, so $\mathbf{s}$ is also a global maximizer over $[-1,1]^n$, and hence so is every $\mathbf{x}^{(t)}$. For any $t \in (-1,1)$, $\mathbf{x}^{(t)}$ is a non-binary global maximizer of $f_{\mathrm A}$ on $[-1,1]^n$ that does not correspond to a binary MaxCut solution.

Such balanced vertices are easy to come by: for $G = K_3$ with binary
maximizer $\mathbf{s} = (1, 1, -1)$, vertex $1$ satisfies $s_2 + s_3 = 0$, so $\mathbf{x}^{(t)} = (t, 1, -1)$ is a non-binary global maximizer for every $t \in (-1, 1)$.
\end{proof}







\subsection{Theorem~\ref{thm: adj_lin_two_props}}

\begin{proof}
(i) Let $\mathbf{x}\in\{-1,1\}^n$ and $\mathbf{x}^{(i)}=\mathbf{x}-2\mathbf{x}_i\mathbf{e}_i$. Since $\mathbf{A}_{ii}=0$, we have
\[
f_{\mathrm A}(\mathbf{x}^{(i)})-f_{\mathrm A}(\mathbf{x})
= 4\mathbf{x}_i(\mathbf{A}\mathbf{x})_i.
\]
Thus, $f_{\mathrm A}(\mathbf{x}^{(i)})>f_{\mathrm A}(\mathbf{x})$ implies that $\mathbf{x}_i(\mathbf{A}\mathbf{x})_i>0$ for some $i$.

For the perturbed biased objective, we have $\nabla f_{\mathrm{B}}(\mathbf{x})=-2\mathbf{A}\mathbf{x}-\lambda \mathbf 1$, hence
\[
\mathbf{x}_i \nabla_i f_{\mathrm{B}}(\mathbf{x})=-2\mathbf{x}_i(\mathbf{A}\mathbf{x})_i-\lambda \mathbf{x}_i.
\]
If $\mathbf{x}_i(\mathbf{A}\mathbf{x})_i>0$, then $2\mathbf{x}_i(\mathbf{A}\mathbf{x})_i\geq 2$ (it is an even integer), so
\[
\mathbf{x}_i \nabla_i f_{\mathrm{B}}(\mathbf{x})\le -2+\lambda <0
\qquad(\text{for any } \lambda\in(0,2)).
\]
At a binary point, being a PGA fixed point requires $\mathbf{x}_i\nabla_i f_{\mathrm{B}}(\mathbf{x})\ge 0$ for all $i$
(equivalently, $\nabla_i f_{\mathrm{B}}(\mathbf{x})\ge 0$ when $\mathbf{x}_i=1$ and $\nabla_i f_{\mathrm{B}}(\mathbf{x})\le 0$ when $\mathbf{x}_i=-1$).
Therefore, $\mathbf{x}$ cannot be a fixed point of PGA.

\medskip
(ii) Let $\mathbf{x}^\star$ be a local maximizer of $f_{\mathrm{B}}$ over $[-1,1]^n$ and define the interior index set
$S:=\{i:\ |\mathbf{x}_i^\star|<1\}$.  For each $i\in S$, first-order necessary conditions give
\[
\nabla_i f_{\mathrm{B}}(\mathbf{x}^\star)=0
\quad\Longleftrightarrow\quad
2(\mathbf{A}\mathbf{x}^\star)_i+\lambda=0
\]
\[
\quad\Longleftrightarrow\quad
(\mathbf{A}\mathbf{x}^\star)_i=-\lambda/2.
\tag{$\star$}
\]
Suppose $S\neq\emptyset$. Consider the induced subgraph on $S$ with adjacency matrix $\mathbf{A}_{SS}$.

\emph{Case 1: $\mathbf{A}_{SS}=0$ (no edges inside $S$).}
Then each $(\mathbf{A}\mathbf{x}^\star)_i$ for $i\in S$ is a sum of $\pm1$ terms coming only from neighbors in $V\setminus S$
(because $S$ is independent and $\mathbf{x}^\star_j=\pm1$ for $j\notin S$). Hence, $(\mathbf{A}\mathbf{x}^\star)_i\in\mathbb Z$.
This contradicts $(\star)$ since $\lambda/2\notin\mathbb Z$. Thus, this case is impossible.

\emph{Case 2: $\mathbf{A}_{SS}\neq 0$ (there is at least one edge inside $S$).}
Then $\mathbf{A}_{SS}$ has at least one positive eigenvalue (e.g. by Perron--Frobenius on a nonnegative matrix),
and $\mathrm{tr}(\mathbf{A}_{SS})=0$ (zero diagonal), so it must also have a negative eigenvalue:
$\lambda_{\min}(\mathbf{A}_{SS})<0$. Let $d\in\mathbb R^n$ be supported on $S$ with $\mathbf{d}_S$ a unit eigenvector
for $\lambda_{\min}(\mathbf{A}_{SS})$. For sufficiently small $t>0$, 
\[\mathbf{x}^\star+t \mathbf{d}\in \mathcal U[-1,1]^n\], and
a second-order expansion yields
\[
-f_{\mathrm{B}}(\mathbf{x}^\star+t \mathbf{d})+f_{\mathrm{B}}(\mathbf{x}^\star)
= -t\,\nabla f_{\mathrm{B}}(\mathbf{x}^\star)^T \mathbf{d} + t^2\, \mathbf{d}^T \mathbf{A} \mathbf{d}.
\]
Because $d$ is supported on $S$ and $\nabla_i f_{\mathrm{B}}(\mathbf{x}^\star)=0$ for $i\in S$, the linear term vanishes.
Moreover,
\[
\mathbf{d}^T \mathbf{A} \mathbf{d} = \mathbf{d}_S^T \mathbf{A}_{SS} \mathbf{d}_S = \lambda_{\min}(\mathbf{A}_{SS}) < 0,
\]
so for small $t$, we obtain $f_{\mathrm{B}}(\mathbf{x}^\star+t \mathbf{d}) > f_{\mathrm{B}}(\mathbf{x}^\star)$, contradicting local maximality.

Thus, $S$ must be empty, meaning that every coordinate of $\mathbf{x}^\star$ lies on $\{-1,1\}$, i.e.\ $\mathbf{x}^\star\in\{-1,1\}^n$.
\end{proof}
\subsection{Proposition \ref{prop:global_reset}}
\begin{proof}
Fix a coordinate \(i\). In each global-reset round, coordinate \(i\) is selected with probability \(1/n\). Conditional on being selected, the unbiased reset sends it to the basin of \(1\) with probability \(1/2\). If this happens, gradient descent moves the coordinate to \(1\), the objective strictly decreases, and the candidate point is accepted.  Thus, in each round, any fixed incorrect coordinate is corrected with probability\(1/(2n)\). The probability that coordinate \(i\) has not been corrected after \(m\) rounds is at most\[    \left(1-\frac{1}{2n}\right)^m    \le    \exp\!\left(-\frac{m}{2n}\right).\]Taking a union bound over all \(n\) coordinates, the probability that at leastone coordinate remains incorrect after \(m\) rounds is at most\[    n\exp\!\left(-\frac{m}{2n}\right).\]Therefore, if\[    m \ge 2n\log\!\left(\frac{n}{\delta}\right),\]then all coordinates have been corrected with probability at least \(1-\delta\), and the algorithm reaches \(x^\star\). For a full random restart, a restart succeeds only if all \(n\) coordinates are initialized in the basin of \(1\). Since each coordinate enters that basin with probability \(1/2\), the success probability of one restart is \(2^{-n}\). After\(m_{\rm restart}\) independent restarts, the failure probability is\[    (1-2^{-n})^{m_{\rm restart}}    \le    \exp(-m_{\rm restart}2^{-n}).\]Thus, \(m_{\rm restart}\gtrsim 2^n\log(1/\delta)\) restarts are needed to succeed with probability at least \(1-\delta\).
\end{proof}

\section{\textcolor{black}{Comparison between the discussed MaxCut formulations}}\label{sec: append cut obj comparison}

\subsection{Escapability of stationary points \& $1$-flip repairable points}
Here, we assume that we have 10 ER graphs with $n=100$ and probability of edge creation of $p=0.66$. Then, we want to empirically verify:
\begin{enumerate}
    \item PGA's escapability of $f_\textrm{L}$-interior stationary points w.r.t. $f_\textrm{L}$, $f_\textrm{P}$, and $f_\textrm{B}$. According to the proof of Proposition~\ref{prop: null space of L is cut 0 in f L}, we initialize from $\mathbf{x} = c \mathbf{1}_n$, where $c$ is randomly sampled from the uniform distribution in $(-1,1)$. 
    \item PGA's escapability of $1$-flip repairable points w.r.t. $f_\textrm{L}$, $f_\textrm{P}$, and $f_\textrm{B}$. Here, we initialize from $\mathbf{x} \sim \mathcal{U}\{-1,1\}^n$ such that $\mathbf{x}$ is $1$-flip repairable according to Definition~\ref{def: 1-flip}, where $\mathcal{U}$ is the discrete uniform distribution. 
\end{enumerate}

In both cases, we report the cut value at initialization and the cut value at convergence, running without both local and global search. On all formulations, We use $\alpha = 0.1$ and $\lambda = 0.001$. 

The results in Table~\ref{tab: stationary points escape} show that $f_\textrm{L}$ cannot escape stationary points, verifying the first part of Proposition~\ref{prop: null space of L is cut 0 in f L}. In contrast, the perturbed Laplacian and perturbed biased formulations escape these points, as they are no longer stationary under $f_\textrm{P}$ and $f_\textrm{B}$.

\vspace{0.3em}
\begin{table}[h]
\centering
\caption{Escapability of $f_\textrm{L}$-interior stationary points across different MaxCut formulations.}
\label{tab: stationary points escape}
\begin{small}
\resizebox{0.77\textwidth}{!}{
\begin{tabular}{lcc}
\toprule
\textbf{MaxCut Formulation} & \textbf{Cut at init} & \textbf{Cut at convergence or after time budget expires} \\
\midrule
Laplacian $f_\textrm{L}$            & 0   & 0    \\
Perturbed Laplacian $f_\textrm{P}$  & 0   & 47.2 \\
Perturbed Biased $f_\textrm{B}$               & 0   & 51.1 \\
\bottomrule
\end{tabular}}
\end{small}
\vspace{-0.3em}
\end{table}

The results in Table~\ref{tab: flip escape} show that $f_\textrm{L}$ cannot escape $1$-flip repairable points, verifying the second part of Proposition~\ref{prop: null space of L is cut 0 in f L}. The perturbed Laplacian formulation also can not escape $1$-flip repairable points, verifying Theorem~\ref{th: all fixed of perturbed laplacian}. The perturbed biased formulation escape these points, as indicated by the increases, verifying Theorem~\ref{thm: adj_lin_two_props}.

\vspace{0.3em}
\begin{table}[h]
\centering
\caption{Escapability of $1$-flip repairable points across different MaxCut formulations.}
\label{tab: flip escape}
\begin{small}
\resizebox{0.92\textwidth}{!}{
\begin{tabular}{lccc}
\toprule
\textbf{MaxCut Formulation} & \textbf{Cut at init} & \textbf{Cut at convergence or after time budget expires} & \textbf{Avg. increase} \\
\midrule
Laplacian $f_\textrm{L}$  \eqref{eqn: Cut QUBO}           & 42.4    & 42.4    & 0   \\
Perturbed Laplacian $f_\textrm{P}$ \eqref{eqn: perturbed Laplacian cut}   & 42.4 & 42.4 & 0   \\
Perturbed Biased $f_\textrm{B}$     \eqref{eqn: adj_lin_two_props}           & 42.4 & 50.1 & 7.7 \\
\bottomrule
\end{tabular}}
\end{small}
\vspace{-0.3em}
\end{table}


\subsection{Overall comparison}

Here, we report the average cut value over 3 graphs for each $(n,d)$, comparing the four discussed objectives in Section~\ref{sec: theory}. 
We run the relaxed QUBOs without local search and without global reset, and starting from the same exact initialization. 

\vspace{0.3em}
\begin{table}[htp!]
\centering
\caption{Overall comparison between the discussed MaxCut formulations across different ER graphs as indicated by $(n,d)$ of the first row.}
\label{tab: cut formualtions comparison}
\begin{small}
\resizebox{0.93\textwidth}{!}{
\begin{tabular}{lcccc}
\toprule
\textbf{MaxCut Formulation} & $(100,50)$ & $(1000,100)$ & $(1000,500)$ & $(1000,800)$ \\
\midrule
Laplacian $f_\textrm{L}$ \eqref{eqn: Cut QUBO}         & 1256.2  & 25298.11 & 125046.6 & 200030.6 \\
Perturbed Laplacian $f_\textrm{P}$ \eqref{eqn: perturbed Laplacian cut}  & 1257    & 25319.95 & 125021.3 & 200004.7 \\
Adjacency $f_\textrm{A}$    \eqref{eqn: adj cut formualtion}        & 1397.16 & 28157.02 & 130331.9 & 204291.2 \\
Perturbed Biased $f_\textrm{B}$ (the one used in our Algorithms)   \eqref{eqn: adj_lin_two_props}             & \textbf{1397.45} & \textbf{28166.59} & \textbf{130336.5} & \textbf{204296.5} \\
\bottomrule
\end{tabular}}
\end{small}
\vspace{-0.3em}
\end{table}

As observed in Table~\ref{tab: cut formualtions comparison}, $f_\textrm{A}$ and $f_\textrm{B}$ achieve significantly higher cut values when compared to $f_\textrm{L}$ and $f_\textrm{P}$. 
Also, $f_B$ (the one used in Algorithms~\ref{alg: main mQO-MIS} and \ref{alg: main mQO-Cut}), characterized by Theorem~\ref{th: all fixed of perturbed laplacian}, achieves slightly higher cut values than $f_A$.


\section{\textcolor{black}{mQO gains breakdown and the impact of its differentiable component}}\label{sec: appen diff gain}



To support the claim that the main gains of mQO comes from differentiable updates as compared to discrete local search, we conduct the following experiment. We run mQO-MIS (resp. mQO-MaxCut) on an ER graph with $(n,d) = (3000,100)$ (resp. $(n,d) = (40000, 32000)$), and report the value after the gradient updates, as well as the incremental gains obtained from the proposed global reset and local searches ($2$-flip for MaxCut $(1,2)$-swap for MIS).

\begin{table}[htb]
\caption{MaxCut results showing the improvements from gradient updates, global reset, and discrete local search. Values in parentheses indicate the improvement from the previous step.} 
\label{tab:maxcut_progress}
\centering
\resizebox{0.93\textwidth}{!}{
\begin{tabular}{lcccc}
\toprule
Final cut value & Cut at end of gradient updates & Cut after global reset (increase) & Cut after discrete search (increase) \\
\midrule
321189085 & 321035650 & 321189058 (153408) & 321189085 (27) \\
\bottomrule
\end{tabular}}
\end{table}

\begin{table}[htb]
\caption{MIS results showing the improvements from gradient updates, global reset, and discrete local search. Values in parentheses indicate the improvement from the previous step.}
\label{tab:mis_progress}
\centering
\resizebox{0.93\textwidth}{!}{
\begin{tabular}{lcccc}
\toprule
Final MIS size & MIS at end of gradient updates & MIS after global reset (increase) & MIS after local search (increase) \\
\midrule
196 & 161 & 194 (33) & 196 (2) \\
\bottomrule
\end{tabular}}
\end{table}

Results of Table~\ref{tab:maxcut_progress} and Table~\ref{tab:mis_progress} show that most of the improvement is driven by the gradient updates and the proposed global reset, while the discrete local searches contribute only marginal gains. This indicates that mQO's performance primarily comes from continuous optimization and large-scale subspace exploration, with discrete search serving as a lightweight refinement.





\section{\textcolor{black}{Evaluation on RB, BA, SBM, and DIMACS graphs}}\label{sec: append additional comparisons}

Our main claim is that our method achieves SOTA performance or is very competitive on larger scale graphs (in size and density) under short time budgets and when compared to SOTA methods: ReduMIS, BLS, and Gurobi. This claim is supported by our large ER results in Tables~\ref{tab:er_comparison mis} and \ref{tab:er_comparison cut}.

To further demonstrate our claim of superiority in the large scale regime, here, we provide CPU comparison using stochastic block models (SBMs) graphs (as examples of graphs with community structures, generated using the NetworkX library), CPU+GPU comparison using small and large Random satisfaction Benchmark (RB generated using the code base in \cite{fengregularized}) and Barabasi-Albert (BA, generated using the NetworkX library) graphs, and CPU comparison on the well-known DIAMCS benchmark \cite{johnson1996cliques}. In the tables of this section, \colorbox{gray!20}{Gray-} (resp. \colorbox{blue!20}{blue-}) shaded methods use a CPU (resp. CPU + GPU). 

SBM graphs are characterized by $n$ (number of nodes), $k$ (number of clusters), and $p_{in} > p_{out}$ which are the probability of intra-edge creation and the probability of inter-edge creation, respectively. For RB graphs, they are characterized by $n = n'p'$ (where $n'$ is the number of partitions and $p'$ is the number of nodes contained in each partition), and which is the average degree of the graph. For Barabasi-Albert (BA) graphs, they are characterized by $n$ and $m'$ which is the number of edges that each new node attaches. We note that, for RB and BA, the smallest graphs ($n=3000$) we test here is more than twice as large as the largest that was evaluated in the RLSA GPU paper ($n=1200$) \cite{fengregularized}.

\textcolor{black}{We note that for graphs with multiple connected components, i.e., graphs in which a path between every pair of nodes may not exist, each component can be treated independently, since the overall MIS (or MaxCut) can be obtained by concatenating the sets from the individual components. Furthermore, isolated nodes (i.e., any $v$ with $\textrm{d}(v) = 0$) can be initially removed from the graph and then appended to the final MIS. For MaxCut, isolated nodes have no effect to the final cut value as they are edge-less.}

We use a time limit of 5 minutes for MIS and 10 minutes for MaxCut. For each graph setting, the reported results are averaged over 3 random seeds.

\begin{table}[htbp]
\centering
\caption{MIS on SBMs (2 clusters and $n=5000$).}
\label{tab:sbm_mis}
\resizebox{0.55\textwidth}{!}{
\begin{tabular}{l|cc|c}
\toprule
$(p_{in}, p_{out})$ & \colorbox{gray!20}{Gurobi} & \colorbox{gray!20}{ReduMIS} & \colorbox{gray!20}{mQO-MIS} (Ours) \\
\midrule
(0.10, 0.05) & 80.67 & \textbf{111} & 107.67 \\
(0.20, 0.05) & 51.33 & \textbf{70} & 67.33 \\
(0.30, 0.05) & 33.67 & \textbf{50} & 48.33 \\
(0.40, 0.05) & 25.33 & 34.67 & \textbf{37} \\
(0.50, 0.05) & 22.33 & 26.33 & \textbf{29.33} \\
\bottomrule
\end{tabular}}
\end{table}

\begin{table}[htbp]
\centering
\caption{MaxCut on SBMs (2 clusters and $n=30000$).}
\label{tab:sbm_maxcut}
\resizebox{0.77\textwidth}{!}{
\begin{tabular}{l|ccc|c}
\toprule
$(p_{in}, p_{out})$ & \colorbox{gray!20}{Greedy} & \colorbox{gray!20}{Gurobi} & \colorbox{gray!20}{BLS} & \colorbox{gray!20}{mQO-MaxCut} (Ours) \\
\midrule
(0.10, 0.05) & 17243748 & 17374698.67 & \textbf{17384884.33} & 17384622.67 \\
(0.30, 0.05) & 39879834 & 40059772.33 & 40067676.67 & \textbf{40075474.67} \\
(0.50, 0.05) & 62414402.33 & 62606129 & 62614216.33 & \textbf{62624312} \\
(0.70, 0.05) & 84878154.66 & 85060107.33 & 85060616 & \textbf{85075080} \\
(0.90, 0.05) & 107239124.33 & 107373206 & 107372057.67 & \textbf{107383285.33} \\
\bottomrule
\end{tabular}}
\end{table}

\begin{table}[htbp]
\centering
\caption{MIS on larger RB graphs.}
\label{tab:rb_mis}
\resizebox{0.92\textwidth}{!}{
\begin{tabular}{l|ccc|cc}
\toprule
RB $(n = n'p', d)$ & \colorbox{gray!20}{Gurobi} & \colorbox{gray!20}{ReduMIS} & \colorbox{blue!20}{RLSA} (GPU only) & \colorbox{gray!20}{mQO-MIS} & \colorbox{blue!20}{pmQO-MIS} \\
\midrule
(3000 = 30$\times$100, 1500) & 13.33 & 16 & 16.33 & 16.33 & \textbf{17} \\
(10000 = 100$\times$100, 5000) & 12.67 & -- & 15.67 & 18 & \textbf{18.67} \\
\bottomrule
\end{tabular}}
\end{table}

\begin{table}[htbp]
\centering
\caption{MaxCut on large BA graphs.}
\label{tab:ba_maxcut}
\resizebox{0.88\textwidth}{!}{
\begin{tabular}{l|cccc|cc}
\toprule
BA $(n, m')$ & \colorbox{gray!20}{Greedy} & \colorbox{gray!20}{Gurobi} & \colorbox{gray!20}{BLS} & \colorbox{blue!20}{RLSA} (GPU only) & \colorbox{gray!20}{mQO-MaxCut} & \colorbox{blue!20}{pmQO-MaxCut} \\
\midrule
(3000, 1500) & 1422866.67 & \textbf{1425100.33} & \textbf{1425100.33} & \textbf{1425100.33} & \textbf{1425100.33} & \textbf{1425100.33} \\
(30000, 15000) & 142110222 & -- & -- & -- & 142183221.33 & \textbf{142183224.67} \\
\bottomrule
\end{tabular}}
\end{table}

As observed, for the SBM results of Table~\ref{tab:sbm_mis} and Table~\ref{tab:sbm_maxcut}, mQO under-performs on sparser graphs, i.e., when $p_{in} < 0.4$ for MIS and $p_{in} < 0.3$ for MaxCut. However, as the edge density inside the clusters increases, our method clearly demonstrates stronger results as compared to Gurobi and SOTA heuristics.

For the RB graphs of \ref{tab:rb_mis}, when $n = 3000$, our GPU version (pmQO-MIS) finds the best results, whereas our CPU-only version is on par with both RLSA and ReduMIS. For the BA graphs of Table~\ref{tab:ba_maxcut}, our method is on par with the smaller BA graph. For larger BA graphs, not only our GPU version outperforms RLSA, but also our non-GPU version.

Furthermore, in Table~\ref{tab:dimacs_mis}, we compare mQO against Gurobi and SOTA heuristics using the DIMACS dataset (64 graphs) for MIS. We note that the largest DIMACS graph (with $n=4000$ and densoty around $0.5$) are generally smaller when compared to the large ER graphs in Table~\ref{tab:er_comparison mis}, the SBM graphs in Table~\ref{tab:sbm_mis}, and the RB graphs in Table~\ref{tab:rb_mis}. As observed, across different sizes and densities, our CPU version is on par with the SOTA heuristic, ReduMIS, on 61 out of 64 graphs. Also, our mQO-MIS achieves better results than Gurobi on five graphs. 



\begin{table*}[htbp]
\centering
\caption{DIMACS benchmark results for MIS.}
\label{tab:dimacs_mis}
\resizebox{0.72\textwidth}{!}{
\begin{tabular}{lccc|cc|c}
\toprule
Graph Name & $n$ & $m$ & Density & \colorbox{gray!20}{Gurobi} & \colorbox{gray!20}{ReduMIS} & \colorbox{gray!20}{mQO-MIS} \\
\midrule
brock200-1 & 200 & 14834 & 0.745427 & \textbf{6} & \textbf{6} & \textbf{6} \\
brock200-2 & 200 & 9876 & 0.496281 & \textbf{11} & \textbf{11} & \textbf{11} \\
brock200-3 & 200 & 12048 & 0.605427 & \textbf{9} & \textbf{9} & \textbf{9} \\
brock200-4 & 200 & 13089 & 0.657739 & \textbf{8} & \textbf{8} & \textbf{8} \\
brock400-1 & 400 & 59723 & 0.748409 & \textbf{7} & \textbf{7} & \textbf{7} \\
brock400-2 & 400 & 59786 & 0.749198 & \textbf{8} & \textbf{8} & 7 \\
brock400-3 & 400 & 59681 & 0.747882 & \textbf{7} & \textbf{7} & \textbf{7} \\
brock400-4 & 400 & 59765 & 0.748935 & \textbf{7} & \textbf{7} & \textbf{7} \\
brock800-1 & 800 & 207505 & 0.649265 & 9 & \textbf{10} & \textbf{10} \\
brock800-2 & 800 & 208166 & 0.651333 & 9 & \textbf{10} & \textbf{10} \\
brock800-3 & 800 & 207333 & 0.648727 & 10 & \textbf{11} & 10 \\
brock800-4 & 800 & 207643 & 0.649696 & \textbf{10} & \textbf{10} & \textbf{10} \\
C125-9 & 125 & 6963 & 0.898452 & \textbf{4} & \textbf{4} & \textbf{4} \\
C250-9 & 250 & 27984 & 0.899084 & \textbf{5} & \textbf{5} & \textbf{5} \\
C500-9 & 500 & 112332 & 0.900457 & \textbf{5} & \textbf{5} & \textbf{5} \\
C1000-9 & 1000 & 450079 & 0.901059 & \textbf{6} & \textbf{6} & 5 \\
C2000-5 & 2000 & 999836 & 0.500168 & 14 & \textbf{16} & \textbf{16} \\
C2000-9 & 2000 & 1799532 & 0.900216 & \textbf{6} & \textbf{6} & \textbf{6} \\
C4000-5 & 4000 & 4000268 & 0.500159 & 11 & \textbf{17} & \textbf{17} \\
c-fat200-1 & 200 & 1534 & 0.077085 & \textbf{18} & \textbf{18} & \textbf{18} \\
c-fat200-2 & 200 & 3235 & 0.162563 & \textbf{9} & \textbf{9} & \textbf{9} \\
c-fat200-5 & 200 & 8473 & 0.425779 & \textbf{3} & \textbf{3} & \textbf{3} \\
c-fat500-1 & 500 & 4459 & 0.035743 & \textbf{40} & \textbf{40} & \textbf{40} \\
c-fat500-2 & 500 & 9139 & 0.073259 & \textbf{20} & \textbf{20} & \textbf{20} \\
c-fat500-5 & 500 & 23191 & 0.185900 & \textbf{8} & \textbf{8} & \textbf{8} \\
c-fat500-10 & 500 & 46627 & 0.373764 & \textbf{4} & \textbf{4} & \textbf{4} \\
DSJC500-5 & 500 & 62624 & 0.501996 & \textbf{13} & \textbf{13} & \textbf{13} \\
DSJC1000-5 & 1000 & 249826 & 0.500152 & 13 & \textbf{15} & \textbf{15} \\
gen200-p0-9-44 & 200 & 17910 & 0.900000 & \textbf{5} & \textbf{5} & \textbf{5} \\
gen200-p0-9-55 & 200 & 17910 & 0.900000 & \textbf{5} & \textbf{5} & \textbf{5} \\
gen400-p0-9-55 & 400 & 71820 & 0.900000 & \textbf{8} & \textbf{8} & \textbf{8} \\
gen400-p0-9-65 & 400 & 71820 & 0.900000 & \textbf{7} & \textbf{7} & \textbf{7} \\
gen400-p0-9-75 & 400 & 71820 & 0.900000 & \textbf{6} & \textbf{6} & \textbf{6} \\
hamming6-2 & 64 & 1824 & 0.904762 & \textbf{2} & \textbf{2} & \textbf{2} \\
hamming6-4 & 64 & 704 & 0.349206 & \textbf{12} & \textbf{12} & \textbf{12} \\
hamming8-2 & 256 & 31616 & 0.968627 & \textbf{2} & \textbf{2} & \textbf{2} \\
hamming8-4 & 256 & 20864 & 0.639216 & \textbf{16} & \textbf{16} & \textbf{16} \\
hamming10-2 & 1024 & 518656 & 0.990225 & \textbf{2} & \textbf{2} & \textbf{2} \\
hamming10-4 & 1024 & 434176 & 0.828935 & \textbf{20} & \textbf{20} & \textbf{20} \\
johnson8-2-4 & 28 & 210 & 0.555556 & \textbf{7} & \textbf{7} & \textbf{7} \\
johnson8-4-4 & 70 & 1855 & 0.768116 & \textbf{5} & \textbf{5} & \textbf{5} \\
johnson16-2-4 & 120 & 5460 & 0.764706 & \textbf{15} & \textbf{15} & \textbf{15} \\
johnson32-2-4 & 496 & 107880 & 0.878788 & \textbf{31} & \textbf{31} & \textbf{31} \\
keller4 & 171 & 9435 & 0.649123 & \textbf{15} & \textbf{15} & \textbf{15} \\
keller5 & 776 & 225990 & 0.751546 & \textbf{31} & \textbf{31} & \textbf{31} \\
keller6 & 3361 & 4619898 & 0.818191 & 61 & \textbf{63} & \textbf{63} \\
MANN-a9 & 45 & 918 & 0.927273 & \textbf{3} & \textbf{3} & \textbf{3} \\
MANN-a27 & 378 & 70551 & 0.990148 & \textbf{3} & \textbf{3} & \textbf{3} \\
MANN-a45 & 1035 & 533115 & 0.996300 & \textbf{3} & \textbf{3} & \textbf{3} \\
MANN-a81 & 3321 & 5506380 & 0.998825 & \textbf{3} & \textbf{3} & \textbf{3} \\
p-hat300-3 & 300 & 33390 & 0.744482 & \textbf{9} & \textbf{9} & \textbf{9} \\
p-hat500-2 & 500 & 62946 & 0.504577 & \textbf{36} & \textbf{36} & \textbf{36} \\
p-hat500-3 & 500 & 93800 & 0.751904 & \textbf{10} & \textbf{10} & \textbf{10} \\
p-hat700-1 & 700 & 60999 & 0.249332 & \textbf{65} & \textbf{65} & \textbf{65} \\
p-hat700-2 & 700 & 121728 & 0.497560 & \textbf{49} & \textbf{49} & \textbf{49} \\
p-hat700-3 & 700 & 183010 & 0.748048 & \textbf{10} & \textbf{10} & \textbf{10} \\
p-hat1000-1 & 1000 & 122253 & 0.244751 & \textbf{75} & \textbf{75} & \textbf{75} \\
p-hat1000-2 & 1000 & 244799 & 0.490088 & \textbf{54} & \textbf{54} & \textbf{54} \\
p-hat1500-1 & 1500 & 284923 & 0.253434 & 85 & \textbf{87} & \textbf{87} \\
p-hat1500-2 & 1500 & 568960 & 0.506080 & \textbf{62} & \textbf{62} & \textbf{62} \\
p-hat1500-3 & 1500 & 847244 & 0.753608 & 9 & \textbf{12} & \textbf{12} \\
\bottomrule
\end{tabular}}
\end{table*}

Overall, these results further demonstrate the clear advantage of mQO in larger (and denser) graphs, even when compared to recent efficient GPU methods.



\section{Impact of The Global Reset \& The Selection of $\rho$}\label{sec: appen ablation global}

Here, we conduct two experiments to investigate the impact of the differentiable mutation-based strategy and the choice of the reset parameter, $\rho$. Table~\ref{tab:ablation_global_mis} and Table~\ref{tab:ablation_global_maxcut} report the results for the mQO-MIS and mQO-MaxCut algorithms, respectively, using values of $\rho \in [0,1)$. For both experiments, we use graphs with different number of nodes and density (as indicated by $(n,d)$). 

For both problems, the results in the \textcolor{black}{first row} (i.e., when $\rho=0$) correspond to running our algorithms \textit{without global reset}. The first main observation is that mQO without the global reset underperforms most of the cases $\rho > 0$. This highlights the importance of the proposed global reset strategy and its role in effectively improving performance. The second observation is that, for MIS, the best results are obtained when $\rho$ is set to $0.5$ or $0.6$, whereas for MaxCut, the best results are achieved when $\rho = 0.8$. Therefore, we use these values in our main experiments.

\section{Impact of The Adopted Local Search}\label{sec: appen ablation local}
\begin{table*}[htp]
      \caption{Ablation study on the impact of local search. For both problems, results are averaged over $8$ random seeds. \textcolor{black}{Our algorithms here are run with including global reset.}}
      \vspace{-0.1cm}
      \label{tab:ablation_local}
      \begin{center}
        \begin{small}
          \begin{sc}
          \resizebox{0.94\textwidth}{!}{
            \begin{tabular}{lccccc}
              \toprule
               Problem & Time Limit & $(n, d)$ & mQO without Local Search & mQO with Local Search  \\
              \midrule
              &&$(1000, 100)$  & 66.25  & \textbf{66.375}  \\
              &&$(1000, 300)$  & 25.25  & \textbf{25.375}  \\
              &&$(1000, 500)$  & \textbf{14.75}  & \textbf{14.75}  \\
              \textbf{MIS} & 5 Min
              &$(3000, 100)$  & 197.50 & \textbf{198.25} \\
              &&$(3000, 300)$  & 79.50  & \textbf{79.625}  \\
              &&$(3000, 1000)$ & \textbf{26.00}  & \textbf{26.00}  \\
              \midrule
        && $(100, 50)$      & \textbf{1432.625} & \textbf{1432.625} \\
        && $(1000, 100)$    & 28504.25   & \textbf{28518.375} \\
        \textbf{MaxCut} & 10 Min
        & $(1000, 500)$  & 130905.67  & \textbf{130997.375}  \\
        && $(1000, 800)$  & 204748.67 & \textbf{204787.125} \\

          \bottomrule
        \end{tabular}}
      \end{sc}
    \end{small}
  \end{center}
  \vskip -0.1in
\end{table*}

Here, we report an ablation study that highlights the impact of incorporating local search in mQO-MIS and mQO-MaxCut (last steps in Algorithms~\ref{alg: main mQO-MIS} and \ref{alg: main mQO-Cut}). Table~\ref{tab:ablation_local} presents the results for different graph orders and densities, as indicated in the third column. As observed, for both problems, including local search in mQO marginally improves the results. This holds for all configurations except in the third and last case for MIS and for $(n,d) = (100,50)$ in the MaxCut. Compared to the global reset, local search yields smaller improvements. This empirically motivates our proposed mutation-based reset in mQO.

\section{Hyperparameter Settings for MGA}
\label{appen: hyperparameter settings}

Here, we provide the detailed hyperparameter settings of the MGA optimizer used in our experiments.

Tables~\ref{tab:er_mis_hyperparameters} and~\ref{tab:er_maxcut_hyperparameters} summarize the settings for MIS and MaxCut, respectively.

\begin{table}[t]
\centering
\caption{Hyperparameter settings for mQO-MIS and pmQO-MIS on ER graphs.}
\label{tab:er_mis_hyperparameters}
\begin{center}
    \begin{small}
        \begin{sc}
            \begin{tabular}{lcccc}
            \toprule
            $(n,d)$ & $\alpha$ & Momentum & $\rho$ & $T_{\mathrm{gs}}$ \\
            \midrule
            $(1000, 100)$     & 0.80 & 0.30 & 0.70 & $60$ \\
            $(1000, 300)$     & 0.80 & 0.45 & 0.70 & $60$ \\
            $(1000, 500)$     & 0.80 & 0.45 & 0.60 & $60$ \\
            $(3000, 100)$     & 0.80 & 0.30 & 0.60 & $60$ \\
            $(3000, 300)$     & 0.80 & 0.45 & 0.60 & $60$ \\
            $(3000, 1000)$    & 0.80 & 0.45 & 0.50 & $60$ \\
            $(10000, 5000)$   & 0.80 & 0.75 & 0.50 & $60$ \\
            $(20000, 10000)$  & 0.80 & 0.75 & 0.50 & $60$ \\
            $(30000, 15000)$  & 0.80 & 0.75 & 0.50 & $60$ \\
            \bottomrule
        \end{tabular}
      \end{sc}
    \end{small}
  \end{center}
  \vskip -0.1in
\end{table}

\begin{table}[htb]
  \caption{Hyperparameter settings for mQO-MaxCut and pmQO-MaxCut on ER graphs.}
  \label{tab:er_maxcut_hyperparameters}
  \begin{center}
    \begin{small}
      \begin{tabular}{lcccc}
        \toprule
        $(n,d)$ & $\alpha$ & Momentum & $\rho$ & $T_{\mathrm{gs}}$  \\
        \midrule
        $(100, 50)$        & 0.0025 & 0.9 & 0.80 & $90$  \\
        $(1000, 100)$      & 0.0025 & 0.8 & 0.80 & $90$  \\
        $(1000, 500)$      & 0.0025 & 0.8 & 0.80 & $90$  \\
        $(1000, 800)$      & 0.0025 & 0.8 & 0.80 & $90$  \\
        $(30000, 15000)$   & $5\times10^{-5}$ & 0.8 & 0.80 & $90$  \\
        $(30000, 24000)$   & $5\times10^{-5}$ & 0.8 & 0.80 & $90$  \\
        $(40000, 20000)$   & $5\times10^{-5}$ & 0.8 & 0.80 & $90$  \\
        $(40000, 32000)$   & $5\times10^{-5}$ & 0.8 & 0.80 & $90$  \\
        \bottomrule
      \end{tabular}
    \end{small}
  \end{center}
  \vskip -0.1in
\end{table}

\begin{table}[htbp]
\centering
\caption{Hyperparameter settings for mQO-MIS on SBMs (2 clusters and $n=5000$).}
\label{tab:sbm_mis_para}
\resizebox{0.55\textwidth}{!}{
\begin{tabular}{lcccc}
\toprule
$(p_{in}, p_{out})$ & $\alpha$ & Momentum & $\rho$ & $T_{\mathrm{gs}}$ \\
\midrule
(0.10, 0.05) & 0.80 & 0.45 & 0.60 & 60 \\
(0.20, 0.05) & 0.80 & 0.45 & 0.60 & 60 \\
(0.30, 0.05) & 0.80 & 0.45 & 0.60 & 60 \\
(0.40, 0.05) & 0.80 & 0.45 & 0.60 & 60 \\
(0.50, 0.05) & 0.80 & 0.45 & 0.60 & 60 \\
\bottomrule
\end{tabular}}
\end{table}

\begin{table}[htbp]
\centering
\caption{Hyperparameter settings for mQO-MIS on large RB graphs.}
\label{tab:rb_mis_para}
\resizebox{0.59\textwidth}{!}{
\begin{tabular}{l|cccc}
\toprule
RB $(n = n'p', d)$ & $\alpha$ & Momentum & $\rho$ & $T_{\mathrm{gs}}$ \\
\midrule
(3000 = 30$\times$100, 1500) & 0.80 & 0.50 & 0.50 & 60  \\
(10000 = 100$\times$100, 5000) & 0.80 & 0.70 & 0.50 & 60 \\
\bottomrule
\end{tabular}}
\end{table}

\begin{table}[htbp]
\centering
\caption{Hyperparameter settings for MaxCut on large BA graphs.}
\label{tab:ba_maxcut_para}
\resizebox{0.69\textwidth}{!}{
\begin{tabular}{lcccc}
\toprule
BA $(n, m')$ & $\alpha$ & Momentum & $\rho$ & $T_{\mathrm{gs}}$ \\
\midrule
(3000, 1500) & 0.001 & 0.80 & 0.80 & 90 \\
(30000, 15000) & $5\times10^{-5}$ & 0.80 & 0.80 & 90 \\
\bottomrule
\end{tabular}}
\end{table}

\begin{table*}[htbp]
\centering
\caption{Hyperparameter settings for mQO-MIS on DIMACS benchmark graphs.}
\label{tab:dimacs_mis_para}
\resizebox{0.5\textwidth}{!}{
\begin{tabular}{lcccc}
\toprule
Graph Name & $\alpha$ & Momentum & $\rho$ & $T_{\mathrm{gs}}$ \\
\midrule
brock200-1 & 0.75 & 0.80 & 0.6 & 60 \\
brock200-2 & 0.75 & 0.80 & 0.6 & 60 \\
brock200-3 & 0.75 & 0.80 & 0.6 & 60 \\
brock200-4 & 0.75 & 0.80 & 0.6 & 60 \\
brock400-1 & 0.75 & 0.80 & 0.6 & 60 \\
brock400-2 & 0.75 & 0.80 & 0.6 & 60 \\
brock400-3 & 0.75 & 0.80 & 0.6 & 60 \\
brock400-4 & 0.75 & 0.80 & 0.6 & 60 \\
brock800-1 & 0.75 & 0.80 & 0.6 & 60 \\
brock800-2 & 0.75 & 0.80 & 0.6 & 60 \\
brock800-3 & 0.75 & 0.80 & 0.6 & 60 \\
brock800-4 & 0.75 & 0.80 & 0.6 & 60 \\
C125-9 & 0.75 & 0.80 & 0.6 & 60 \\
C250-9 & 0.75 & 0.80 & 0.6 & 60 \\
C500-9 & 0.75 & 0.80 & 0.6 & 60 \\
C1000-9 & 0.75 & 0.80 & 0.6 & 60 \\
C2000-5 & 0.75 & 0.80 & 0.6 & 60 \\
C2000-9 & 0.75 & 0.80 & 0.6 & 60 \\
C4000-5 & 0.75 & 0.80 & 0.6 & 60 \\
c-fat200-1 & 0.001 & 0.1 & 0.6 & 60 \\
c-fat200-2 & 0.001 & 0.1 & 0.6 & 60 \\
c-fat200-5 & 0.001 & 0.1 & 0.6 & 60 \\
c-fat500-1 & 0.001 & 0.1 & 0.6 & 60 \\
c-fat500-2 & 0.001 & 0.1 & 0.6 & 60 \\
c-fat500-5 & 0.001 & 0.1 & 0.6 & 60 \\
c-fat500-10 & 0.001 & 0.1 & 0.6 & 60 \\
DSJC500-5 & 0.75 & 0.80 & 0.6 & 60 \\
DSJC1000-5 & 0.75 & 0.80 & 0.6 & 60 \\
gen200-p0-9-44 & 0.75 & 0.80 & 0.6 & 60 \\
gen200-p0-9-55 & 0.75 & 0.80 & 0.6 & 60 \\
gen400-p0-9-55 & 0.75 & 0.80 & 0.6 & 60 \\
gen400-p0-9-65 & 0.75 & 0.80 & 0.6 & 60 \\
gen400-p0-9-75 & 0.75 & 0.80 & 0.6 & 60 \\
hamming6-2 & 0.75 & 0.80 & 0.6 & 60 \\
hamming6-4 & 0.75 & 0.80 & 0.6 & 60 \\
hamming8-2 & 0.75 & 0.80 & 0.6 & 60 \\
hamming8-4 & 0.75 & 0.80 & 0.6 & 60 \\
hamming10-2 & 0.001 & 0.1 & 0.6 & 60 \\
hamming10-4 & 0.75 & 0.80 & 0.6 & 60 \\
johnson8-2-4 & 0.75 & 0.80 & 0.6 & 60 \\
johnson8-4-4 & 0.75 & 0.80 & 0.6 & 60 \\
johnson16-2-4 & 0.75 & 0.80 & 0.6 & 60 \\
johnson32-2-4 & 0.75 & 0.80 & 0.6 & 60 \\
keller4 & 0.75 & 0.80 & 0.6 & 60 \\
keller5 & 0.75 & 0.80 & 0.6 & 60 \\
keller6 & 0.75 & 0.80 & 0.6 & 60 \\
MANN-a9 & 0.75 & 0.80 & 0.6 & 60 \\
MANN-a27 & 0.75 & 0.80 & 0.6 & 60 \\
MANN-a45 & 0.001 & 0.1 & 0.6 & 60 \\
MANN-a81 & 0.001 & 0.1 & 0.6 & 60 \\
p-hat300-3 & 0.75 & 0.80 & 0.6 & 60 \\
p-hat500-2 & 0.75 & 0.80 & 0.6 & 60 \\
p-hat500-3 & 0.75 & 0.80 & 0.6 & 60 \\
p-hat700-1 & 0.75 & 0.80 & 0.6 & 60 \\
p-hat700-2 & 0.75 & 0.80 & 0.6 & 60 \\
p-hat700-3 & 0.75 & 0.80 & 0.6 & 60 \\
p-hat1000-1 & 0.75 & 0.80 & 0.6 & 60 \\
p-hat1000-2 & 0.75 & 0.80 & 0.6 & 60 \\
p-hat1500-1 & 0.75 & 0.80 & 0.6 & 60 \\
p-hat1500-2 & 0.75 & 0.80 & 0.6 & 60 \\
p-hat1500-3 & 0.75 & 0.80 & 0.6 & 60 \\
\bottomrule
\end{tabular}}
\end{table*}

\section{Choice of the optimizer For MaxCut}\label{appen: ablation opt choice}

In this section, we conduct a study, comparing vanilla gradient ascent (GA) with the momentum-based GA (MGA) we use in this paper. We use $\alpha = 0.001$ for GA and $\alpha = 0.0025$ for MGA where and the momentum parameter is set to $0.8$.  

    \begin{table}[htb]
      \caption{Ablation study comparing GA and momentum-based GA for the MaxCut problem, reported across different graph configurations.}
      \label{tab:ablation_optimizer}
      \begin{center}
        \begin{small}
          \begin{sc}
            \begin{tabular}{lcc}
              \toprule
              $(n, d)$ & MGA & GA \\
              \midrule
              $(1000, 500)$ & \textbf{130969.00} & 130560.33 \\
              $(1000, 800)$ & \textbf{204817.67} & 204576.33 \\
              $(2000, 500)$ & \textbf{264524.67} & 264351.00 \\
              \bottomrule
            \end{tabular}
          \end{sc}
        \end{small}
      \end{center}
      \vskip -0.1in
    \end{table}

Table~\ref{tab:ablation_optimizer} presents the results for the MaxCut problem using different graph as indicated in the first column. As observed, MGA consistently outperforms GA.

\section{The mQO-MaxCut algorithm \& the implementation of the parallelized GPU version of mQO}\label{appen: implementation of pmQO}

Algorithm~\ref{alg: main mQO-Cut} presents the detailed procedure of mQO-MaxCut. Algorithm~\ref{alg: main pmQO-MIS} and Algorithm~\ref{alg: main pmQO-Cut} provide the detailed procedures for our GPU parallelized versions of mQO. 


$\mathcal{G}^{(b)}$ denotes the gradient direction evaluated at batch index $b$ of
$\mathbf{x}^{(b)}$, i.e., $\mathcal{G}^{(b)}=\nabla h(\mathbf{x}^{(b)})$
for MIS and $\mathcal{G}^{(b)}=\nabla f_{\textrm{B}}(\mathbf{x}^{(b)})$
for MaxCut. $\mathcal{P}_K$ denotes the pool of the top-$K$ retained solutions. $\operatorname{TopK}(\cdot)$ returns the best $K$ solutions ranked by cardinality (for MIS) or cut value (for MaxCut). $R^{(b)}$ denotes a randomly sampled subset of $V$ with size
$\lfloor \rho n \rfloor$.



\begin{algorithm}[t]
\caption{\textbf{mQO-MaxCut}.}
\textbf{Input}: Graph $G$, initial $\mathbf{x}$, gradient operator $\mathcal{G}$ w.r.t.\ $f_{\textrm{B}}$ in \eqref{eqn: adj_lin_two_props}, reset parameter $\rho$, step size $\alpha$, number of global search rounds $T_{\mathrm{gs}}$, and $S_{\textrm{sol}}=\{\cdot\}$ (current best set). \\
\vspace{1mm}
\textbf{Output}: The best set solution $S_{\textrm{sol}}$\\
\vspace{1mm}
\small{01:} \textbf{While} time budget permits \\
\vspace{1mm}
\small{02:} \hspace{5mm}\textbf{Initialize} $\mathbf{x}$ as
$\mathbf{x} \leftarrow \Pi_{[-1,1]^n}(\mathbf{d}_{\text{base}} + \boldsymbol{\epsilon})$,
where $\boldsymbol{\epsilon} \sim \mathcal{N}(\mathbf{0}, \sigma^2 \mathbf{I})$ and
$\mathbf{d}_{\text{base},v} = 2(1 - \frac{\textrm{d}(v)}{\Delta}) - 1$ \\
\vspace{1mm}
\small{03:} \hspace{5mm}\textbf{Obtain} $\mathbf{x}\leftarrow \Pi_{[-1,1]^n}(\mathbf{x}+\alpha \mathcal{G})$ until convergence (optimizer gradient updates)\\
\vspace{1mm}
\small{04:} \hspace{5mm}\textbf{Obtain} $S = \{v\in V : \mathbf{x}_v > 0\}$ and \textbf{update} best set as $S_{\textrm{sol}} \leftarrow S$ only \textbf{If} $\texttt{Cut}(S_{\textrm{sol}}) < \texttt{Cut}(S)$ \\
\vspace{1mm}
\small{05:} \hspace{5mm}\textbf{For} $T_{\mathrm{gs}}$ rounds (global reset loop) \\
\vspace{1mm}
\small{06:} \hspace{12mm}\textbf{Obtain} $\mathbf{x}$ from $S_{\textrm{sol}}$ such that $\mathbf{x}_v = 1$ if $v \in S_{\textrm{sol}}$ and $\mathbf{x}_v =-1$ if $v \notin S_{\textrm{sol}}$\\
\vspace{1mm}
\small{07:} \hspace{12mm}\textbf{Obtain} set $R\subset V$ by randomly choosing $\lfloor\rho n\rfloor$ indices\\
\vspace{1mm}
\small{08:} \hspace{12mm}\textbf{Set} $\mathbf{x}_v \leftarrow 0,\ \forall v\in R$ (reset for global search)\\
\vspace{1mm}
\small{09:} \hspace{12mm}\textbf{Obtain} $\mathbf{x}\leftarrow \Pi_{[-1,1]^n}(\mathbf{x}+\alpha \mathcal{G})$ until convergence (optimizer gradient updates)\\
\vspace{1mm}
\small{10:} \hspace{12mm}\textbf{Obtain} $S' = \{v\in V : \mathbf{x}_v > 0\}$ and \textbf{update} best set as $S_{\textrm{sol}} \leftarrow S'$ only \textbf{If} $\texttt{Cut}(S_{\textrm{sol}}) < \texttt{Cut}(S')$\\
\vspace{1mm}
\small{11:} \hspace{5mm}\textbf{Obtain} $S \leftarrow \texttt{OneTwoFlip}(S_{\textrm{sol}},G)$ and \textbf{update} $S_{\textrm{sol}} \leftarrow S$ only \textbf{If} $\texttt{Cut}(S_{\textrm{sol}}) < \texttt{Cut}(S)$ (local search)\\
\vspace{1mm}
\vspace{-3.5mm}
\label{alg: main mQO-Cut}
\end{algorithm}

\begin{algorithm*}[t]
\caption{\textbf{pmQO-MIS (the GPU version of our MIS algorithm)}.}
\textbf{Input}: Graph $G$, initial $\mathbf{x}$, gradient operator $\mathcal{G}$ w.r.t. the MIS function $h$ in \eqref{eqn: MIS QUBO}, reset parameter $\rho$, step size $\alpha$, \textcolor{black}{number of global search rounds $T_{\mathrm{gs}}$}, batch size $B$, number of retained solutions $K$, and $I_{\textrm{sol}}=\{\cdot\}$ (current best set) \\
\vspace{1mm}
\textbf{Output}: The best obtained solution set $I_{\textrm{sol}}$\\
\vspace{1mm}
\small{01:} \textbf{While} time budget permits \\
\vspace{1mm}
\small{02:} \hspace{5mm}\textbf{Initialize} a batch of vectors $\{\mathbf{x}^{(b)}\}_{b=1}^{B}$ as
$\mathbf{x}^{(b)} \leftarrow \Pi_{[0,1]^n}(\mathbf{d}_{\text{base}}+\boldsymbol{\epsilon}^{(b)})$,
where $\mathbf{d}_{\text{base},v}=1 - \frac{\textrm{d}(v)}{\Delta}$ and
$\boldsymbol{\epsilon}^{(b)}\sim\mathcal{N}(\mathbf{0},\sigma^2 I)$ \\
\vspace{1mm}
\small{03:} \hspace{5mm}\textbf{Obtain} $\mathbf{x}^{(b)}\leftarrow \Pi_{[0,1]^n}(\mathbf{x}^{(b)}+\alpha \mathcal{G}^{(b)})$ in parallel for $b=1,\ldots,B$ until
$I^{(b)}=\{v\in V:\mathbf{x}^{(b)}_v>0\}$ is a MIS. \\
\vspace{1mm}
\small{04:} \hspace{5mm}\textbf{Obtain} $I^{(b)}=\{v\in V:\mathbf{x}^{(b)}_v=1\}$ for $b=1,\ldots,B$ and update the top-$K$ solution pool (where solutions are ranked by $|I^{(b)}|$)
\[
\mathcal{P}_K \leftarrow \operatorname{TopK}\left(\mathcal{P}_K \cup \{I^{(b)}\}_{b=1}^{B}\right),
\]
 \\
\vspace{1mm}
\small{05:} \hspace{5mm}\textbf{Update} best set as $I_{\textrm{sol}}\leftarrow I$ only \textbf{If} there exists $I\in\mathcal{P}_K$ such that $|I_{\textrm{sol}}|<|I|$ \\
\vspace{1mm}
\small{06:} \hspace{5mm}\textbf{For} $T_{\mathrm{gs}}$ rounds \textbf{do} \\
\vspace{1mm}
\small{07:} \hspace{12mm}\textbf{Sample} base solutions $\{\bar{I}^{(b)}\}_{b=1}^{B}$ from $\mathcal{P}_K$ and construct
$\bar{\mathbf{x}}^{(b)}_v=\mathds{1}(v\in \bar{I}^{(b)})$ \\
\vspace{1mm}
\small{08:} \hspace{12mm}\textbf{Obtain} $R^{(b)}\subset V$ by randomly choosing $\lfloor \rho n\rfloor$ indices for each $b$ \\
\vspace{1mm}
\small{09:} \hspace{12mm}\textbf{Set} $\bar{\mathbf{x}}^{(b)}_v\leftarrow 0,\ \forall v\in R^{(b)}$ for $b=1,\ldots,B$ \\
\vspace{1mm}
\small{10:} \hspace{12mm}\textbf{Obtain} $\mathbf{x}^{(b)}\leftarrow \Pi_{[0,1]^n}(\bar{\mathbf{x}}^{(b)}+\alpha \mathcal{G}^{(b)})$ in parallel until a MIS is detected \\
\vspace{1mm}
\small{11:} \hspace{12mm}\textbf{Obtain} $I'^{(b)}=\{v\in V:\mathbf{x}^{(b)}_v>0\}$ for $b=1,\ldots,B$ and update
\[
\mathcal{P}_K \leftarrow \operatorname{TopK}\left(\mathcal{P}_K \cup \{I'^{(b)}\}_{b=1}^{B}\right)
\]
\\
\vspace{1mm}
\small{12:} \hspace{12mm}\textbf{Update} best set as $I_{\textrm{sol}}\leftarrow I$ only \textbf{If} there exists $I\in\mathcal{P}_K$ such that $|I_{\textrm{sol}}|<|I|$ \\
\vspace{1mm}
\small{13:} \hspace{5mm}\textbf{Obtain} $I \leftarrow \texttt{OneTwoSwap}(I_{\textrm{sol}},G)$ and update best set as $I_{\textrm{sol}}\leftarrow I$ only \textbf{If} $|I_{\textrm{sol}}|<|I|$ \\
\vspace{1mm}
\vspace{-3.5mm}
\label{alg: main pmQO-MIS}
\end{algorithm*}

\begin{algorithm*}[t]
\caption{\textbf{pmQO-MaxCut (the GPU version of our MaxCut algorithm)}.}
\textbf{Input}: Graph $G$, initial $\mathbf{x}$, gradient operator $\mathcal{G}$ w.r.t.\ $f_{\textrm{B}}$ in \eqref{eqn: adj_lin_two_props}, reset parameter $\rho$, step size $\alpha$, number of global search rounds $T_{\mathrm{gs}}$, batch size $B$, number of retained solutions $K$, and $S_{\textrm{sol}}=\{\cdot\}$ (current best set). \\
\vspace{1mm}
\textbf{Output}: The best set solution $S_{\textrm{sol}}$\\
\vspace{1mm}
\small{01:} \textbf{While} time budget permits \\
\vspace{1mm}
\small{02:} \hspace{5mm}\textbf{Initialize} a batch of vectors $\{\mathbf{x}^{(b)}\}_{b=1}^{B}$ as
$\mathbf{x}^{(b)} \leftarrow \Pi_{[-1,1]^n}(\mathbf{d}_{\text{base}}+\boldsymbol{\epsilon}^{(b)})$,
where $\boldsymbol{\epsilon}^{(b)}\sim\mathcal{N}(\mathbf{0},\sigma^2\mathbf{I})$ and
$\mathbf{d}_{\text{base},v}=2(1-\frac{\textrm{d}(v)}{\Delta})-1$ \\
\vspace{1mm}
\small{03:} \hspace{5mm}\textbf{Obtain} $\mathbf{x}^{(b)}\leftarrow \Pi_{[-1,1]^n}(\mathbf{x}^{(b)}+\alpha \mathcal{G}^{(b)})$ in parallel for $b=1,\ldots,B$ until convergence \\
\vspace{1mm}
\small{04:} \hspace{5mm}\textbf{Obtain} $S^{(b)}=\{v\in V:\mathbf{x}^{(b)}_v>0\}$ for $b=1,\ldots,B$ and update the top-$K$ pool (solutions are ranked by $\texttt{Cut}(S^{(b)})$)
\[
\mathcal{P}_K \leftarrow \operatorname{TopK}\left(\mathcal{P}_K \cup \{S^{(b)}\}_{b=1}^{B}\right),
\]
 \\
\vspace{1mm}
\small{05:} \hspace{5mm}\textbf{Set}
$S_{\textrm{sol}} \leftarrow \arg\max_{S\in\mathcal{P}_K}\texttt{Cut}(S)$
only \textbf{If}
$\texttt{Cut}(S_{\textrm{sol}})<\max_{S\in\mathcal{P}_K}\texttt{Cut}(S)$ \\
\vspace{1mm}
\small{06:} \hspace{5mm}\textbf{For} $T_{\mathrm{gs}}$ rounds (global reset loop) \\
\vspace{1mm}
\small{07:} \hspace{10mm}\textbf{Sample} base solutions $\{\bar{S}^{(b)}\}_{b=1}^{B}$ from $\mathcal{P}_K$ and construct
$\bar{\mathbf{x}}^{(b)}_v = 1$ ($-1$) if $v \in$ ($\notin$) $\bar{S}^{(b)}$ \\
\vspace{1mm}
\small{08:} \hspace{10mm}\textbf{Obtain} $R^{(b)}\subset V$ by randomly choosing $\lfloor\rho n\rfloor$ indices for each $b$ \\
\vspace{1mm}
\small{09:} \hspace{10mm}\textbf{Set} $\bar{\mathbf{x}}^{(b)}_v\leftarrow 0,\ \forall v\in R^{(b)}$ for $b=1,\ldots,B$ \\
\vspace{1mm}
\small{10:} \hspace{10mm}\textbf{Obtain} $\mathbf{x}^{(b)}\leftarrow \Pi_{[-1,1]^n}(\bar{\mathbf{x}}^{(b)}+\alpha \mathcal{G}^{(b)})$ in parallel until convergence \\
\vspace{1mm}
\small{11:} \hspace{10mm}\textbf{Obtain} $S'^{(b)}=\{v\in V:\mathbf{x}^{(b)}_v>0\}$ for $b=1,\ldots,B$ and update 
\[
\mathcal{P}_K \leftarrow \operatorname{TopK}\left(\mathcal{P}_K \cup \{S'^{(b)}\}_{b=1}^{B}\right)
\]
 \\
\vspace{1mm}
\small{12:} \hspace{10mm}\textbf{Set}
$S_{\textrm{sol}} \leftarrow \arg\max_{S\in\mathcal{P}_K}\texttt{Cut}(S)$
only \textbf{If}
$\texttt{Cut}(S_{\textrm{sol}})<\max_{S\in\mathcal{P}_K}\texttt{Cut}(S)$ \\
\vspace{1mm}
\small{13:} \hspace{5mm}\textbf{For each} $S\in\mathcal{P}_K$, \textbf{obtain} $\widetilde{S}\leftarrow\texttt{OneTwoFlip}(S,G)$ and \textbf{update} best set as $S_{\textrm{sol}}\leftarrow \widetilde{S}$ only \textbf{If} $\texttt{Cut}(S_{\textrm{sol}})<\texttt{Cut}(\widetilde{S})$ \\
\vspace{1mm}
\vspace{-3.5mm}
\label{alg: main pmQO-Cut}
\end{algorithm*}





\section{\textcolor{black}{Impact of the choice of $\lambda$ in the perturbed biased MaxCut formulation $f_\textrm{B}$}}\label{appen: lambda ablation}
\begin{table}[htp]
\caption{Ablation study on the impact of the global reset ratio $\rho$ for MIS.
Each value corresponds to the averaged best result with 8 seeds and a run-time budget of 5 minutes.}
\vspace{-0.01cm}
\label{tab:ablation_global_mis}
\centering
\begin{small}
\begin{sc}
\begin{tabular}{lccc}
\toprule
$\rho$ & $(3000,100)$ & $(3000,300)$ & $(3000,1000)$ \\
\midrule
0.0 & 184.75 \tiny{$\pm$ 1.49} & 74.00 \tiny{$\pm$ 0.76} & 24.50 \tiny{$\pm$ 0.53} \\
0.1 & 181.88 \tiny{$\pm$ 2.70} & 75.12 \tiny{$\pm$ 1.81} & 24.50 \tiny{$\pm$ 0.53} \\
0.2 & 185.62 \tiny{$\pm$ 0.92} & 75.25 \tiny{$\pm$ 0.71} & 24.62 \tiny{$\pm$ 0.52} \\
0.3 & 187.63 \tiny{$\pm$ 1.60} & 77.12 \tiny{$\pm$ 0.83} & 25.25 \tiny{$\pm$ 0.46} \\
0.4 & 192.38 \tiny{$\pm$ 1.69} & 78.00 \tiny{$\pm$ 1.07} & 25.50 \tiny{$\pm$ 0.53} \\
0.5 & 195.50 \tiny{$\pm$ 2.00} & 76.25 \tiny{$\pm$ 0.93} & \textbf{26.00 \tiny{$\pm$ 0.00}} \\
0.6 & \textbf{197.50 \tiny{$\pm$ 0.93}} & \textbf{79.50 \tiny{$\pm$ 0.76}} & 25.88 \tiny{$\pm$ 0.35} \\
0.7 & 191.50 \tiny{$\pm$ 2.20} & 78.75 \tiny{$\pm$ 0.89} & 25.38 \tiny{$\pm$ 0.74} \\
0.8 & 188.00 \tiny{$\pm$ 1.07} & 77.12 \tiny{$\pm$ 1.13} & 25.75 \tiny{$\pm$ 0.46} \\
0.9 & 184.75 \tiny{$\pm$ 1.28} & 75.00 \tiny{$\pm$ 0.93} & 23.88 \tiny{$\pm$ 0.35} \\
\bottomrule
\end{tabular}
\end{sc}
\end{small}
\vskip -0.1in
\end{table}

\begin{table}[t]
\caption{Ablation study on the impact of the global reset ratio $\rho$ for the MaxCut problem.
Each value corresponds to the averaged best result with 3 seeds and a run-time budget of 10 minutes.}
\vspace{-0.0cm}
\label{tab:ablation_global_maxcut}
\centering
\begin{small}
\setlength{\tabcolsep}{3pt}
\begin{sc}
\begin{tabular}{lccc}
\toprule
$\rho$ & $(1000,500)$ & $(1000,800)$ & $(2000,1000)$ \\
\midrule
0.0 & 130660.33 \tiny{$\pm$ 112.19} & 204457.67 \tiny{$\pm$ 38.31} & 515972.67 \tiny{$\pm$ 188.26} \\
0.1 & 130622.00 \tiny{$\pm$ 87.18}  & 204550.00 \tiny{$\pm$ 23.04} & 515764.00 \tiny{$\pm$ 136.52} \\
0.2 & 130700.67 \tiny{$\pm$ 86.26}  & 204569.67 \tiny{$\pm$ 61.78} & 515660.67 \tiny{$\pm$ 232.65} \\
0.3 & 130783.00 \tiny{$\pm$ 93.19}  & 204610.33 \tiny{$\pm$ 77.55} & 515978.00 \tiny{$\pm$ 207.56} \\
0.4 & 130752.33 \tiny{$\pm$ 114.77} & 204617.00 \tiny{$\pm$ 82.96} & 516070.33 \tiny{$\pm$ 155.01} \\
0.5 & 130840.00 \tiny{$\pm$ 127.90} & 204667.67 \tiny{$\pm$ 76.85} & 516175.00 \tiny{$\pm$ 245.38} \\
0.6 & 130810.67 \tiny{$\pm$ 109.09} & 204712.00 \tiny{$\pm$ 108.84} & 516482.67 \tiny{$\pm$ 223.99} \\
0.7 & 130889.00 \tiny{$\pm$ 140.35} & 204742.67 \tiny{$\pm$ 104.08} & 516485.67 \tiny{$\pm$ 197.22} \\
0.8 & \textbf{130905.67 \tiny{$\pm$ 135.31}} & \textbf{204748.67 \tiny{$\pm$ 94.47}} & \textbf{516621.00 \tiny{$\pm$ 204.95}} \\
0.9 & 130860.00 \tiny{$\pm$ 63.76}  & 204696.00 \tiny{$\pm$ 78.30} & 516487.67 \tiny{$\pm$ 107.95} \\
\bottomrule
\end{tabular}
\end{sc}
\end{small}
\vskip -0.1in
\end{table}

\begin{figure}[t]
    \centering
    \includegraphics[width=0.4\linewidth]{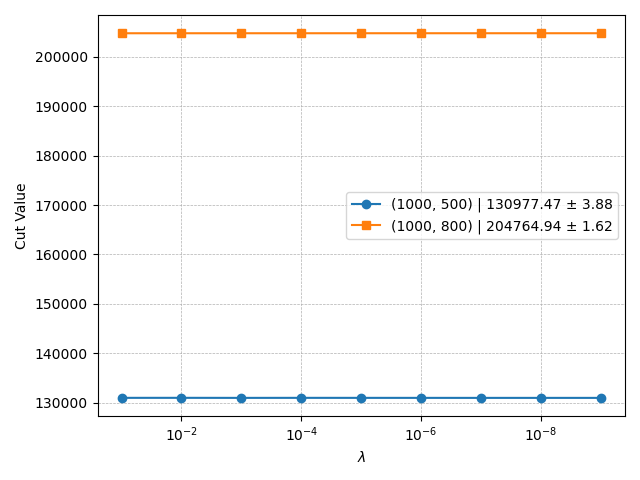}
    \caption{Impact of $\lambda$ on mQO-MaxCut using two ER graphs with $n=1000$ and two densities.}
    \label{fig: lambda}
\end{figure}

Here, we report the impact of $\lambda$ in the perturbed formulation, $f_\textrm{B}$, using two ER graphs with $n=1000$ and two densities. As shown in Figure~\ref{fig: lambda}, the performance of mQO-MaxCut remains stable across several orders of magnitude of $\lambda$. In both graph settings, the small standard deviation (given in the legend) shows that a wide range of $\lambda$ values returns very similar results. This indicates that the choice of $\lambda$ does not impact mQO's performance as long as it satisfies the condition in Theorem~\ref{thm: adj_lin_two_props}.


\section{\textcolor{black}{Limitations \& Future Work}}\label{sec: append limiations and future work}

While mQO demonstrates strong performance on large-scale graphs under constrained run-time budgets, its advantages are less pronounced on smaller graph instances. In these regimes (e.g., smaller DIMACS as well as the first rows of Tables~\ref{tab:er_comparison mis} and~\ref{tab:er_comparison cut}), classical heuristics and ILP solvers can be more efficient and competitive. In addition, mQO relies on a small set of hyperparameters (e.g., step size and momentum), which may require tuning across problem instances. This sensitivity can introduce additional overhead compared to off-the-shelf ILP solvers. We will further clarify these aspects in the final version, either in the Conclusion or in a dedicated Limitations section.


\section{Impact Statement}\label{sec: append impact}

This work improves the practicality and understanding of gradient-based methods for large-scale combinatorial optimization. By showing that optimization stalling, rather than model capacity or computational resources, is the main bottleneck in relaxed QUBO formulations, we shift the focus from heavy parallelization to algorithmic mechanisms that enhance exploration. Our proposed mQO framework enables gradient-based solvers to achieve competitive performance with state-of-the-art heuristics and commercial solvers without heavily relying on GPUs or parallel computing. This makes high-quality combinatorial optimization more accessible in resource-constrained environments and broadens its applicability to real-world settings where specialized hardware is unavailable.

More broadly, this work strengthens the connection between continuous optimization and discrete combinatorial methods, and encourages the development of efficient, hardware-agnostic algorithms that emphasize exploration over brute-force computation.

\end{document}